\colorlet{ColOpen}{black!40}
\colorlet{ColTrue}{black}
\colorlet{ColFail}{black!18}
\newtheorem{theorem}{Theorem}
\newtheorem*{conjecture}{Conjecture}
\newtheorem{remark}[theorem]{Remark}
\newcommand{\DD}[2][]{\frac{\mathrm{d}^2{#1}}{\mathrm{d}{#2}^2}}
\newcommand{\norm}[2][]{\left\|#2\right\|_{#1}}
\newcommand{\N}{\mathbb{N}}
\newcommand{\R}{\mathbb{R}}
\newcommand{\C}{\mathbb{C}}
\newcommand{\id}{\mathbb{I}}
\newcommand{\eps}{\varepsilon}
\newcommand{\ii}{\mathrm{i}}
\newcommand{\e}{\mathrm{e}}
\newcommand{\Hs}{\mathcal{H}}
\newcommand{\tr}{\mathrm{tr}} 
\newcommand{\dd}{\, \mathrm{d}}
\begin{document}

\title{The state of the Lieb--Thirring conjecture}
\date{}
\author{Lukas Schimmer}
\publishers{\small Institut Mittag-Leffler, The Royal Swedish Academy of Sciences, 182 60 Djursholm, Sweden, lukas.schimmer@kva.se}
 \maketitle
\begin{abstract}
In 1976 Lieb and Thirring established upper bounds on sums of powers of the negative eigenvalues of a Schr{\"o}dinger operator in terms of semiclassical phase-space integrals. Over the last 45 years the optimal constants in these inequalities, the values of which were conjectured by Lieb and Thirring,  have been subject of intense investigations. We aim to review existing results.
\end{abstract}

\emph{This paper is dedicated to Elliott H.~Lieb on the occasion of his 90th birthday with gratitude for his mentorship and for all that he has taught me.}

\section{Introduction}\label{sec:intro}
Estimates on the bound states of a Hamiltonian are of importance in quantum mechanics. For a Schr{\"o}dinger operator $-\Delta+V$ on $L^2(\R^d)$ with decaying potential $V:\R^d\to\R$ of particular interest are bounds on power series $\sum_{j\ge 1}|E_j|^\gamma$ of the negative eigenvalues $E_1\le E_2\le\dots<0$.  
Illuminating heuristics can be obtained by recalling the semiclassical approach, which dates back to the early days of quantum mechanics. Each volume $(2\pi)^d$ in $2d$-dimensional phase space is assumed to support one quantum state. Identifying the Hamiltonian with the classical Hamilton function $p^2+V(x)$ one can thus hope to approximate  $\sum_{j\ge 1}|E_j|^\gamma$ by
\begin{align*}
\sum_{j\ge1}|E_j|^\gamma\approx\frac{1}{(2\pi)^d}\int_{\R^d}\int_{\R^d}(p^2+V(x))_-^\gamma\dd p\dd x\,.
\end{align*}
Here (and in the remainder) $a_-=(|a|-a)/2$ denotes the negative part of a variable, function or self-adjoint operator.
The integration over the momentum $p$ can be carried out explicitly, leading to  
\begin{align*}
\sum_{j\ge1}|E_j|^\gamma\approx L_{\gamma,d}^{\mathrm{cl}}\int_{\R^d}V(x)_-^{\gamma+d/2}\dd x
\end{align*}
with the constant 
\begin{align*}
L_{\gamma,d}^{\mathrm{cl}}=\frac{1}{(2\pi)^d}\int_{\R^d}(1-p^2)_+^\gamma\dd p
=\frac{\Gamma(\gamma+1)}{(4\pi)^{d/2}\Gamma(\gamma+\frac{d}{2}+1)}\,.
\end{align*}
Although these arguments are not mathematically rigorous, the approximation above is true in the limit of large potentials. To be more precise, if $E_j(\lambda)$ are the negative eigenvalues of $-\Delta+\lambda V$ then for all $\gamma\ge0$ the so-called Weyl asymptotics hold
\begin{align}
\lim_{\lambda\to \infty}\frac{\sum_{j\ge 1}|E_j(\lambda)|^\gamma}{\lambda^{\gamma+d/2}}
=L_{\gamma,d}^{\mathrm{cl}}\int_{\R^d}V(x)_-^{\gamma+d/2}\dd x
\label{eq:Weyl} 
\end{align}
 under suitable assumptions on $V$. An equivalent asymptotic result can be obtained by considering $-\hbar^2\Delta+V$ and letting Planck's constant $\hbar\to 0$. In 1976 Lieb and Thirring \cite{Lieb1976} proved that the semiclassical approximation also holds as an upper bound with a possibly larger constant $L_{\gamma,d}$, establishing what is now referred to as Lieb--Thirring inequalities. 
\begin{theorem}[Lieb--Thirring inequalities]\label{th:LT}
Let $d\ge1$ and $\gamma\ge0$ with 
\begin{align*}
\gamma\ge1/2&\qquad\text{if}\qquad d=1\,,\\
\gamma>0&\qquad\text{if}\qquad d=2\,,\\
\gamma\ge 0&\qquad\text{if}\qquad d\ge 3\,.
\end{align*}
Then there is a constant $L_{\gamma,d}$ such that for all real-valued $V\in L^{\gamma+d/2}(\R^d)$ the negative eigenvalues $E_j$ of $-\Delta+V$ satisfy
\begin{align}
\sum_{j\ge1}|E_j|^\gamma\le L_{\gamma,d}\int_{\R^d}V(x)_-^{\gamma+\frac d2}\dd x\,.
\label{eq:LT}
\end{align}
\end{theorem}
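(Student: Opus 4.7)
The plan is to combine the Birman--Schwinger principle with the Aizenman--Lieb monotonicity argument so as to reduce \eqref{eq:LT}, in each dimension, to a single critical exponent, and then to invoke a dimension-specific technique for that base case. Because $-\Delta+V\ge-\Delta-V_-$ in the sense of quadratic forms, it suffices to work with non-positive potentials. For each $\tau>0$ introduce the Birman--Schwinger operator
\[
K_\tau:=V_-^{1/2}(-\Delta+\tau)^{-1}V_-^{1/2}\,.
\]
A standard computation identifies the counting function $N(-\tau):=\#\{j:E_j<-\tau\}$ with the number of eigenvalues of $K_\tau$ strictly greater than $1$, and the layer-cake identity
\[
\sum_{j\ge 1}|E_j|^\gamma=\gamma\int_0^\infty \tau^{\gamma-1}N(-\tau)\dd\tau
\]
then reduces \eqref{eq:LT} to producing an upper bound for $N(-\tau)$ whose $\tau^{\gamma-1}$-weighted integral is controlled by $\int_{\R^d}V_-^{\gamma+d/2}\dd x$.

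Next, the Aizenman--Lieb trick eliminates the need to prove \eqref{eq:LT} for all exponents separately. Starting from the elementary identity
\[
a_-^\gamma=\frac{1}{B(\gamma-\gamma_0,\gamma_0+1)}\int_0^\infty \sigma^{\gamma-\gamma_0-1}(a+\sigma)_-^{\gamma_0}\dd\sigma\,,\qquad \gamma>\gamma_0\ge 0\,,
\]
applied to each eigenvalue $E_j$ of $-\Delta+V$ and combined with \eqref{eq:LT} for exponent $\gamma_0$ applied to the shifted potentials $V+\sigma$, a Fubini exchange and a beta-function computation yield \eqref{eq:LT} for exponent $\gamma$ with constant $L_{\gamma,d}\le L_{\gamma_0,d}\cdot L_{\gamma,d}^{\mathrm{cl}}/L_{\gamma_0,d}^{\mathrm{cl}}$. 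It therefore suffices to treat the smallest admissible exponent in each dimension: $\gamma=0$ when $d\ge 3$, some arbitrarily small fixed $\gamma>0$ when $d=2$, and $\gamma=1/2$ when $d=1$.

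For $d\ge 3$ the base case is the Cwikel--Lieb--Rozenblum bound $N(0)\le C_d\int V_-^{d/2}\dd x$, provable either via Cwikel's weak Schatten-class estimate on $K_\tau$ at $\tau=0$ or via Lieb's heat-kernel/path-integral method, or via Li--Yau/coherent-state inequalities. For $d=2$, a direct Birman--Schwinger count for $K_\tau$ combined with the $\tau$-integration above works once the logarithmic singularity of $(-\Delta+\tau)^{-1}$ at the origin is handled, which is precisely why one must take $\gamma>0$. The main obstacle is the endpoint $d=1$, $\gamma=1/2$: the one-dimensional free resolvent does not lie in any Schatten ideal sharp enough to control $N(-\tau)$ at threshold $1$, and moreover any attractive $V\not\equiv 0$ produces a negative eigenvalue no matter how small it is. To handle this case one must exploit the one-dimensional structure directly, for instance via Weidl's commutator/telescoping argument for a non-sharp constant, or the Hundertmark--Lieb--Thomas reduction to a $2\times 2$ matrix trace identity, which in fact yields the conjectured sharp value $L_{1/2,1}=1/2$.
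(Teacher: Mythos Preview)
Your route is valid in outline but differs from the paper's and has one genuine gap. The paper actually \emph{completes} the Birman--Schwinger setup you begin with: it bounds $N_E(V)\le\tr K_E^m$ for $m\ge 1$, applies the operator inequality $\tr(B^{1/2}AB^{1/2})^m\le\tr(B^{m/2}A^mB^{m/2})$ to get $N_E(V)\le c_{m,d}|E|^{-m+d/2}\int|V|^m\dd x$, and then uses the energy shift $N_E(V)\le N_{E/2}\bigl(-(V-E/2)_-\bigr)$ to make the resulting $\tau$-integral converge whenever one can pick $m$ with $\max(1,d/2)<m<\gamma+d/2$. This yields all non-endpoint cases directly and self-containedly; the endpoints $d=1,\gamma=1/2$ and $d\ge3,\gamma=0$ are then cited separately. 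You instead abandon the direct $N(-\tau)$ bound after setting it up and pivot to Aizenman--Lieb, reducing everything \emph{to} the endpoint results, which you invoke as black boxes. That is logically fine for $d=1$ and $d\ge3$, but it imports the much harder CLR and Hundertmark--Lieb--Thomas theorems merely to reach, say, $d=3,\gamma=1$, where the paper's elementary argument already suffices.

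The gap is at $d=2$. Since there is no admissible endpoint $\gamma_0$ to anchor the monotonicity, you are forced back to a ``direct Birman--Schwinger count,'' but the single sentence you write there is not a proof: you identify the logarithmic diagonal singularity of the resolvent as the obstacle but do not say how to overcome it. Two essential ingredients are missing --- a trace inequality that converts $\tr K_\tau^m$ into something computable for $m>1$, and a mechanism (the energy shift) to regularise the $\tau$-integration, which would otherwise diverge at both ends regardless of $m$. These are precisely the tools the paper supplies; once you have them, they already prove every non-endpoint case in every dimension, and the Aizenman--Lieb detour becomes unnecessary for the bulk of the theorem.
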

The inequalities were all proved in Lieb and Thirring's original work \cite{Lieb1976} with the exception of the endpoint cases $d=1,\gamma=1/2$ and $d\ge3,\gamma=0$. The former was established by Weidl \cite{Weidl1996} and the latter by Cwikel \cite{Cwikel1977}, Lieb \cite{Lieb1976b,Lieb1980} and Rozenblum \cite{Rozenblum1972}. They will be discussed in Sections \ref{sec:WHLT} and \ref{sec:CLR}, respectively. 
\begin{remark}\label{rem:sc}
To emphasise that the inequality provides an upper bound on the quantum mechanical quantity  $\sum_{j\ge1}|E_j|^\gamma$ in terms of the semiclassical quantity $\int_{\R^d}V_-^{\gamma+d/2}\dd x$ arising in a phase-space computation, it is illustrative to write it as
\begin{align*}
\tr_{L^2(\R^d)}(-\Delta+V)_-^\gamma
\le \frac{R_{\gamma,d}}{(2\pi)^d}\int_{\R^d}\int_{\R^d}(p^2+V(x))_-^\gamma\dd p\dd x
\end{align*}
with the constant $R_{\gamma,d}=L_{\gamma,d}/L_{\gamma,d}^{\mathrm{cl}}$. 
\end{remark}
In contrast to Weyl asymptotics, the upper bounds \eqref{eq:LT} cannot hold for all $\gamma\ge0$. The restrictions presented above are optimal. In dimensions $d=1,2$ even weak potentials can support negative eigenvalues due to $0$ being a resonance state. More precisely, for any potential $V\in\mathcal{C}_0^\infty(\R^d)$ with $\int_{\R^d} V\dd x<0$ the Schr{\"o}dinger operator already has at least one negative eigenvalue such that \eqref{eq:LT} fails for $\gamma=0$ in dimensions $d=1,2$. Furthermore, this eigenvalue is of order $\lambda^2$ if  $V$ is replaced by $\lambda V$  in one dimension. Letting $\lambda\to 0$ shows that \eqref{eq:LT} cannot hold if $d=1$ and $2\gamma<\gamma+1/2$. 

While Lieb and Thirring's proof \cite{Lieb1976} did not yield the smallest possible $L_{\gamma,d}$ in \eqref{eq:LT}, the authors conjectured the optimal values of these constants. Over the last 45 years there has been tremendous interest in investigating their conjecture. This review aims to focus on the current state of the subject matter. For a comprehensive review of Lieb--Thirring inequalities including recent research directions we refer to Frank's \cite{Frank2020b}.  We also mention Hundertmark's review \cite{Hundertmark2007} which similarly provides a summary of the state of the field in 2007. A pedagogical introduction to the subject can be found in an upcoming book by Frank, Laptev and Weidl \cite{Frank2021}.

The inequalities have  been extended in various ways. While several of these extensions will be mentioned below in the context of  Lieb and Thirring's conjecture, a complete survey is beyond the scope of this article and some exciting results will have to be omitted. This includes in particular Lieb--Thirring inequalities for complex potentials as well as for fractional Laplacians and Pauli operators. More details can be found in e.g.~\cite{Frank2020b}.

The most famous application of Lieb--Thirring inequalities concerns the stability of matter problem. The relevant dual kinetic energy inequalities will be discussed in Section \ref{sec:kinetic}. However, the inequalities have found applications far beyond that, including in questions relating to the absolute continuity of the spectrum of Schr{\"o}dinger operators and to  Navier--Stokes equations. We refer to \cite{Laptev2012} by Laptev and \cite{Ilyin2022} by Ilyin, Kostianko, and  Zelik, as well as references therein.

We conclude this section by giving the main ideas that Lieb and Thirring used to prove their result.

\begin{proof}[Proof of Theorem \ref{th:LT}]
It is sufficient to consider $V\le0$ and $V\in\mathcal{C}_0^\infty(\R^d)$.  The proof relies on establishing a bound on $N_E(V)$, the number of eigenvalues of $-\Delta+V$ below $E<0$, using the fact that
\begin{align}
\sum_{j\ge1} |E_j|^\gamma=\gamma\int_0^\infty e^{\gamma-1} N_{-e}(V) \dd e\,.
\label{eq:gammaNE}
\end{align}
To study $N_E$, one observes that whenever $E$ is a negative eigenvalue of $H=-\Delta+V$ with eigenfunction $\psi$, then (at least formally) $\sqrt{|V|}\psi$ satisfies $K_E\sqrt{|V|}\psi=\sqrt{|V|}\psi$ with
\begin{align*}
K_E= \sqrt{|V|}(-\Delta-E)^{-1}\sqrt{|V|}\,.
\end{align*}
The operator $K_E$, referred to as the Birman--Schwinger operator, was introduced independently by Birman \cite{Birman1961} and Schwinger \cite{Schwinger1961}. It has several nice properties that facilitate the study of its spectrum, in particular
\begin{enumerate}[(i)]
\item $K_E\ge0$ is compact with eigenvalues $\mu_1(K_E)\ge\mu_2(K_E)\ge\dots\ge0$,
\item $\mu_j(K_{E_j})=1$ \emph{(Birman--Schwinger principle)},
\item $K_E< K_{E'}$ for $E< E'<0$ \emph{(monotonicity)},
\end{enumerate}
where (ii) formalises the aforementioned connection between the spectra of $-\Delta+V$ and $K_E$. Using (i)-(iii) together with a continuity argument, one can prove that $N_E(V)$ coincides with he number of eigenvalues of $K_E$ above $1$, which implies that $N_E(V)\le \tr_{L^2(\R^d)} K_E^m$ for any $m\ge 1$. Upper bounds on such traces can be obtained by means of the abstract inequality \begin{align*}
\tr(B^{1/2}AB^{1/2})^m\le \tr (B^{m/2}A^mB^{m/2})\quad\text{for}\quad A,B\ge0,
\end{align*}
which was established in \cite{Lieb1976} and is likewise referred to as Lieb--Thirring inequality.

Using that $(-\Delta-E)^{-m}$ is given by $((2\pi p)^2-E)^{-m}$ in Fourier space, the diagonal of its integral kernel $G_{E,m}(x-y)$ can be computed to be $G_{E,m}(0)=c_{m,d} |E|^{-m+d/2}$ with some $c_{m,d}>0$. Thus, from the arguments above
\begin{align*}
N_E(V)\le \tr_{L^2(\R^d)} K_E^m\le c_{m,d} |E|^{-m+d/2}\int_{\R^d} |V(x)|^m\dd x\,.
\end{align*}
Combining this with \eqref{eq:gammaNE} would yield an integral that diverges in $e=-E$. However, on account of the variational principle $N_E(V)\le N_{E/2}(-(V-E/2)_-)$  and thus
\begin{align*}
\sum_{j\ge1} |E_j|^\gamma \le 2^{m-d/2}c_{m,d}\gamma\int_0^\infty\int_{\R^d} e^{\gamma-1-m+d/2} (V(x)+e/2)_-^m\dd x\dd e\,.
\end{align*}
If $m$ is chosen to satisfy $d/2<m<\gamma+d/2$ the integral is finite and of the desired form, proving Theorem \ref{th:LT} in all but the endpoint cases.
\end{proof}


\section{Dual kinetic energy inequality}\label{sec:kinetic}
Lieb and Thirring originally introduced \eqref{eq:LT} to prove stability of matter in quantum mechanics \cite{Lieb1975}. Their proof crucially relied on a lower bound on the kinetic energy of electrons. The desired inequality is equivalent to \eqref{eq:LT} for $\gamma=1$. We will state the bound in three versions, each slightly more general than the previous one. 

\begin{theorem}[Kinetic energy inequality for orthonormal functions]\label{th:kinetic}
Let $d\ge1$. There is a constant $K_d$ such that for all $N\in\N$ and all orthonormal $\psi_1,\dots,\psi_N\in L^2(\R^{d})$ with $\nabla \psi_j\in L^2(\R^d)$
\begin{align}
\sum_{j=1}^N\int_{\R^d}|(\nabla \psi_j)(x)|^2\dd x\ge K_d\int_{\R^d}\left(\sum_{j=1}^N|\psi_j(x)|^2\right)^{1+2/d}\dd x\,.
\label{eq:kinetic} 
\end{align}  
The optimal constant $K_d$ is related to the optimal constant $L_{1,d}$ in \eqref{eq:LT} via
\begin{align}
\big((1+{d}/{2})L_{1,d}\big)^{1+{2}/{d}}\big((1+{2}/{d})K_{d}\big)^{1+{d}/{2}}=1\,.
\label{eq:LK}
\end{align}
\end{theorem}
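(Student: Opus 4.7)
The plan is to establish \eqref{eq:kinetic} and \eqref{eq:LK} simultaneously by exhibiting them as Legendre duals of the Lieb--Thirring bound at $\gamma=1$ through the variational principle. The workhorse is the identity that, for any orthonormal $\psi_1,\dots,\psi_N\in H^1(\R^d)$ with density $\rho(x)=\sum_j|\psi_j(x)|^2$ and any admissible potential $V$,
\[
\sum_{j=1}^N\int|\nabla\psi_j|^2\dd x+\int V\rho\dd x=\sum_{j=1}^N\langle\psi_j,(-\Delta+V)\psi_j\rangle\ge-\sum_{k\ge1}|E_k(-\Delta+V)|,
\]
where the inequality is the min--max principle. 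Specialising $V$ in terms of $\rho$, or conversely choosing the $\psi_j$ as eigenfunctions, will convert the two inequalities into one another.

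For the direction Theorem~\ref{th:LT}~$\Rightarrow$~\eqref{eq:kinetic}, I would substitute $V=-c\rho^{2/d}$ for an arbitrary $c>0$ (after a cut-off of $\rho$ to ensure $V\in L^{1+d/2}$). Since $(V)_-^{1+d/2}=c^{1+d/2}\rho^{1+2/d}$, applying \eqref{eq:LT} at $\gamma=1$ to the right-hand side above and rearranging gives
\[
\sum_{j=1}^N\int|\nabla\psi_j|^2\dd x\ge\bigl(c-L_{1,d}\,c^{1+d/2}\bigr)\int\rho^{1+2/d}\dd x.
\]
Maximising the scalar function $c\mapsto c-L_{1,d}c^{1+d/2}$ at $c_\star=(L_{1,d}(1+d/2))^{-2/d}$ yields the peak value $\frac{d}{d+2}c_\star$ and hence \eqref{eq:kinetic} with $K_d\ge\frac{d}{d+2}(L_{1,d}(1+d/2))^{-2/d}$, which rearranges to $K_d(1+2/d)\ge(L_{1,d}(1+d/2))^{-2/d}$.

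For the converse \eqref{eq:kinetic}~$\Rightarrow$~Theorem~\ref{th:LT} at $\gamma=1$, I would take the $\psi_j$ to be the orthonormal eigenfunctions of $-\Delta+V$ corresponding to its negative eigenvalues. The bound above then yields
\[
\sum_j|E_j|\le\int\bigl(V_-(x)\rho(x)-K_d\rho(x)^{1+2/d}\bigr)\dd x,
\]
and a pointwise Legendre transform in $\rho(x)\ge0$ bounds the integrand by $\frac{2}{d+2}(K_d(1+2/d))^{-d/2}V_-(x)^{1+d/2}$; integrating gives \eqref{eq:LT} with $L_{1,d}\le\frac{2}{d+2}(K_d(1+2/d))^{-d/2}$, equivalently $K_d(1+2/d)\le(L_{1,d}(1+d/2))^{-2/d}$. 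Combining this with the opposite inequality from the first direction, raising to the power $1+d/2$, and using $(2/d)(1+d/2)=1+2/d$ produces \eqref{eq:LK}; since equality is forced, the optimal $L_{1,d}$ and the optimal $K_d$ are attained simultaneously.

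The main obstacle is making the two substitutions rigorous. In the first direction $-c\rho^{2/d}$ need not lie in $L^{1+d/2}(\R^d)$, so one must cut $\rho$ off from above and below and pass to a monotone limit. In the second direction, when $-\Delta+V$ has infinitely many bound states (which is the interesting case for sharp constants), one must first truncate to the lowest $N$ eigenfunctions, apply \eqref{eq:kinetic}, and let $N\to\infty$ using monotone convergence on $\sum|E_j|$ and on $\rho$. The scalar optimisation and the pointwise Legendre transform themselves are elementary, and the algebra of exponents in \eqref{eq:LK} then reduces to checking the single identity $(2/d)(1+d/2)=1+2/d$.
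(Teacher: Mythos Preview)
Your proposal is correct and follows essentially the same duality argument as the paper: in both directions you use the variational principle, substitute $V=-c\rho^{2/d}$ (respectively take the $\psi_j$ as eigenfunctions), and optimise the resulting scalar constants to obtain the two matching inequalities that combine to \eqref{eq:LK}. Your treatment is slightly more explicit about the technicalities (cut-off of $\rho$, truncation to finitely many eigenfunctions), but the method and the computations coincide with the paper's proof.
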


\begin{proof}
Let $V\in L^{1+d/2}(\R^d)$ and let $E_j$ denote the negative eigenvalues of $-\Delta+V$. By the variational principle for sums of eigenvalues
\begin{align}
\sum_{j=1}^N\int_{\R^d}\left(|(\nabla \psi_j)(x)|^2+V(x)|\psi_j(x)|^2\right)\dd x\ge \sum_{j=1}^N E_j
\label{eq:sumvar}
\end{align}
for any orthonormal $\psi_1,\dots,\psi_N$ with the properties stated above. Denoting $\rho(x)=\sum_{j=1}^N|\psi_j(x)|^2$ and applying the Lieb--Thirring inequality yields the lower bound
\begin{align*}
\sum_{j=1}^N\int_{\R^d}|(\nabla \psi_j)(x)|^2\dd x
&\ge -L_{1,d}\int_{\R^d}V(x)_-^{1+d/2}\dd x-\int_{\R^d}V(x)\rho(x)\dd x
\end{align*}
The right side can be maximised over the choice of  $V\in L^{1+d/2}(\R^d)$. Clearly the optimal choice of $V$ is of the form $-c\rho(x)^{2/d}$ with $c>0$. This already yields a bound of the form \eqref{eq:kinetic} and  computing the optimal constant $c$ to be
\begin{align*}
c=\left((1+d/2)L_{1,d}\right)^{-2/d}
\end{align*}
once can conclude that
\begin{align*}
K_d\ge(1+2/d)^{-1}\left((1+d/2)L_{1,d}\right)^{-2/d}\,.
\end{align*}

Conversely, let $V\in L^{1+d/2}(\R^d)$ and let $E_j$ be the negative eigenvalues of the Schr{\"o}dinger operator $-\Delta+V$ with corresponding orthonormal eigenfunctions $\psi_j$. For fixed $N\ge1$ denote $\rho(x)=\sum_{j=1}^N|\psi_j(x)|^2$ and apply \eqref{eq:kinetic} to obtain
\begin{align*}
\sum_{j=1}^N|E_j|
&=-\sum_{j=1}^N\int_{\R^d}\left(|(\nabla \psi_j)(x)|^2+V(x)|\psi_j(x)|^2\right)\dd x\\
&\le -K_d\int_{\R^d}\rho(x)^{1+2/d}\dd x
+\int_{\R^d}V(x)_-\rho(x)\dd x\,.
\end{align*}
An upper bound can be obtained by maximising the right side with respect to all $\rho(x)\ge0$. The optimal choice is of the form $cV(x)_-^{d/2}$ with $c>0$. This already yields a bound of the form \eqref{eq:LT} with $\gamma=1$ and  computing the optimal constant $c$ to be
\begin{align*}
c=\left((1+2/d)L_{1,d}\right)^{-d/2}
\end{align*}
one can conclude that
\begin{align*}
L_{1,d}&\le(1+d/2)^{-1}\left((1+2/d)K_{1}\right)^{-d/2}\,. \qedhere
\end{align*}
\end{proof}

Inequality \eqref{eq:kinetic} should be compared to the Gagliardo--Nirenberg inequality
\begin{align}
\int_{\R^d}|(\nabla \psi)(x)|^2\dd x\ge K_d^{\mathrm{one}}\int_{\R^d}|\psi(x)|^{2+4/d}\dd x\,.
\label{eq:GN}
\end{align}
for normalised $\psi\in L^2(\R^d)$ with  $\nabla \psi\in L^2(\R^d)$.  By the same argument as in the proof of Theorem \ref{th:kinetic}, the optimal constant $K_{d}^{\mathrm{one}}$ in \eqref{eq:GN} and the optimal constant $L_{1,d}^{\mathrm{one}}$ in the bound
\begin{align*}
|E_1|\le L_{1,d}^{\mathrm{one}}\int_{\R^d} V(x)_-^{1+d/2}\dd x
\end{align*}
for the lowest eigenvalue of a Schr{\"o}dinger operator $-\Delta+V$ are related via the analogue of \eqref{eq:LK}. Clearly the optimal values satisfy $L_{1,d}\ge L_{1,d}^{\mathrm{one}}$ and equivalently $K_{d}\le K_{d}^{\mathrm{one}}$. 
On account of \eqref{eq:GN}, without the orthogonality condition in Theorem \ref{th:kinetic}, the constant $K_d$ on the right side of \eqref{eq:kinetic} has to be replaced by  $N^{-2/d}K_d^{\mathrm{one}}$. The absence of the factor $N^{-2/d}$   under the assumption of orthogonality is a manifestation of the Pauli exclusion principle.

To elaborate on this, we recall that the state of a quantum mechanical system of $N$ particles in $\R^d$ is (ignoring spin for the moment) described  by a normalised wave function $\psi\in L^2(\R^{dN})$. If the particles are fermions, $\psi$ is further required to be antisymmetric
\begin{align*}
\psi(x_1,\dots, x_i,\dots,x_j,\dots,x_N)=
-\psi(x_1,\dots, x_j,\dots,x_i,\dots,x_N)\quad\text{for}\quad i\neq j.
 \end{align*} 
This property is also referred to as fermions obeying the Pauli exclusion principle. 
The one-body density of $\psi$ is defined as 
\begin{align*}
\rho_\psi(x)=N\int_{\R^{d(N-1)}}|\psi(x,x_2,\dots,x_N)|^2\dd x_2\dots\dd x_N\,.
\end{align*}
Inequality \eqref{eq:kinetic} can be extended to this setting.

\begin{theorem}[Kinetic energy inequality for antisymmetric functions]\label{th:kineticas}
Let $d\ge1$. There is a constant $K_d$ such that for all $N\in\N$ and all antisymmetric, normalised $\psi\in L^2(\R^{dN})$ \begin{align}
\sum_{j=1}^N\int_{\R^{dN}}|(\nabla_j\psi)(x_1,\dots,x_N)|^2\dd x_1\dots\dd x_N\ge
 K_d\int_{\R^d}\rho_\psi(x)^{1+2/d}\dd x\,.
 \label{eq:kineticas}
\end{align}
The optimal constant $K_d$ is related to the optimal constant $L_{1,d}$ in \eqref{eq:LT} via \eqref{eq:LK}.
\end{theorem}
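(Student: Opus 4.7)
The plan is to reduce the inequality \eqref{eq:kineticas} to a statement about the one-particle reduced density matrix and then invoke the Lieb--Thirring inequality \eqref{eq:LT} with $\gamma=1$. For antisymmetric normalised $\psi\in L^2(\R^{dN})$, the first step is to introduce the self-adjoint operator $\gamma_\psi$ on $L^2(\R^d)$ with integral kernel
\[
\gamma_\psi(x,y)=N\int_{\R^{d(N-1)}}\psi(x,x_2,\dots,x_N)\overline{\psi(y,x_2,\dots,x_N)}\dd x_2\dots\dd x_N\,,
\]
and to diagonalise it as $\gamma_\psi=\sum_k\lambda_k|\phi_k\rangle\langle\phi_k|$ with $\lambda_k\ge 0$ and orthonormal $\phi_k\in L^2(\R^d)$. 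A direct computation then rewrites the left-hand side of \eqref{eq:kineticas} as $\tr(-\Delta\,\gamma_\psi)=\sum_k\lambda_k\int_{\R^d}|\nabla\phi_k|^2\dd x$ and identifies $\rho_\psi(x)=\sum_k\lambda_k|\phi_k(x)|^2$, so both quantities are entirely encoded in $\gamma_\psi$, with the occupation numbers $\lambda_k$ acting as weights.

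The crucial step, and the one I expect to be the main obstacle, is to establish the operator bound $0\le\gamma_\psi\le\id$, which translates into $\lambda_k\in[0,1]$ and constitutes the analytic incarnation of the Pauli exclusion principle. I would prove this in second-quantised form by noting that $\sclp{f,\gamma_\psi f}=\|a(f)\psi\|^2$ and using the canonical anticommutation relation $a(f)^*a(f)+a(f)a(f)^*=\|f\|^2$ to conclude $\|a(f)\psi\|^2\le\|f\|^2$; alternatively, one may unfold the definition of the kernel and exploit the antisymmetry of $\psi$ directly. It is precisely at this point that the fermionic nature of $\psi$ enters the proof: without it one would only have $\gamma_\psi\ge 0$, and the argument would collapse to the Gagliardo--Nirenberg-type bound \eqref{eq:GN} with the extra factor $N^{-2/d}$.

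With the Pauli bound in hand, the remainder of the proof mirrors that of Theorem \ref{th:kinetic}. For any $V\in L^{1+d/2}(\R^d)$ with corresponding negative eigenvalues $E_j$ of $-\Delta+V$, the spectral decomposition of $-\Delta+V$ combined with $0\le\gamma_\psi\le\id$ yields
\[
\tr((-\Delta+V)\gamma_\psi)\ge -\tr((-\Delta+V)_-)=-\sum_{j\ge1}|E_j|\ge -L_{1,d}\int_{\R^d} V(x)_-^{1+d/2}\dd x\,,
\]
where the last inequality is \eqref{eq:LT} with $\gamma=1$. Rewriting this as a lower bound on the kinetic energy of $\psi$ and then optimising over $V$ via the choice $V=-c\rho_\psi^{2/d}$, with the same optimal constant $c=((1+d/2)L_{1,d})^{-2/d}$ computed in the proof of Theorem \ref{th:kinetic}, produces \eqref{eq:kineticas} with a constant $K_d$ satisfying \eqref{eq:LK}. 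Sharpness of \eqref{eq:LK} rather than merely one inequality then follows upon observing that Slater determinants built from orthonormal $\psi_1,\dots,\psi_N$ yield density matrices with $\lambda_k\in\{0,1\}$, so that \eqref{eq:kineticas} contains \eqref{eq:kinetic} as a special case and the two optimal constants must coincide.
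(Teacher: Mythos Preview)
Your proposal is correct and follows essentially the same route as the paper: express the kinetic and potential energies in terms of the one-body data, invoke the Pauli principle (you phrase it as the operator bound $0\le\gamma_\psi\le\id$, the paper as the variational characterisation of the fermionic ground state energy, which are equivalent statements), apply the Lieb--Thirring inequality \eqref{eq:LT} with $\gamma=1$, and optimise over $V$ exactly as in the proof of Theorem~\ref{th:kinetic}. The paper likewise notes that Slater determinants recover the orthonormal case, so your argument for the equality of optimal constants matches as well.
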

In the special case where $\psi(x_1,\dots,x_N)=\mathrm{det}(\psi_i(x_j))_{i,j=1}^N$ is a Slater determinant the result reduces to \eqref{eq:kinetic}. 
The proof of Theorem \ref{th:kineticas} is the same as the proof of Theorem \ref{th:kinetic}. The analogous inequality to \eqref{eq:sumvar},  
\begin{align*}
\sum_{j=1}^N\int_{\R^{dN}}\left(|(\nabla_j\psi)(x_1.\dots,x_N)|^2+V(x_j)|\psi(x_1,\dots,x_N)|^2\right)\dd x_1\dots\dd x_N\ge\sum_{j=1}^N E_j\,,
\end{align*}
 is a consequence of the Pauli exclusion principle, as the lowest possible energy of  $N$ noninteracting fermions in the presence of an electric potential $V$ is achieved by filling up the $N$ lowest energy levels of the one-particle Hamiltonian $-\Delta+V$. If the particles under consideration are bosons, the wave function is symmetric and inequality \eqref{eq:kineticas} only holds with $K_d$ replaced by $N^{-2/d}K_d^{\mathrm{one}}$. The fact that electrons are fermions and the resulting validity of \eqref{eq:kineticas} without the factor $N^{-2/d}$ were crucial in Lieb and Thirring's proof of the stability of matter in quantum mechanics  \cite{Lieb1975}. We also refer to \cite{Lieb2010} for a pedagogical introduction to the subject. 
If one considers particles with $q$ spin states, the one-body density has to be replaced by the spin-summed one-body density and the constant then has be decreased by a factor of $q^{-{2/d}}$. 

\begin{remark}[Kinetic energy inequality for density matrices]\label{rem:kineticg}
Yet another slight generalisation can be achieved by extending \eqref{eq:kineticas} to one-body density matrices. For any operator $\gamma$ on $L^2(\R^{d})$ with $0\le \gamma\le \id$ and $\tr_{L^2(\R^d)}\gamma<\infty$ the inequality then states that
\begin{align}
\tr_{L^2(\R^d)}(-\Delta\gamma)\ge K_d\int_{\R^d}\gamma(x,x)^{1+2/d}\dd x\,,
\label{eq:kineticg}
\end{align}
where $\gamma(x,x)$ denotes the diagonal part of the kernel of $\gamma$. If the assumption $\gamma\le\id$ is omitted, the constant $K_d$ has to be replaced by $\norm[\infty]{\gamma}^{-2/d}K_d$ where $\norm[\infty]{\gamma}$ is the largest eigenvalue of $\gamma$.
\end{remark}

While the duality discussed above provided Lieb and Thirring with a practical way of proving the kinetic energy inequality \eqref{eq:kineticas} via \eqref{eq:LT}, direct proofs of the bound were discovered later. Eden and Foias \cite{Eden1991} provided a proof in dimension $d=1$, which was generalised to a matrix-valued setting and subsequently lifted to higher dimensions by Dolbeault, Laptev and Loss \cite{Dolbeault2008}. A proof for all $d\ge1$ by Rumin \cite{Rumin2011} was recently improved by Frank, Hundertmark, Jex and Nam \cite{Frank2021b}. Other direct proofs were given by Lundholm and Solovej \cite{Lundholm2013} as well as Sabin \cite{Sabin2016}. We will give the main ideas of the arguments in \cite{Frank2021b}. It is convenient to prove the inequality in the version of Remark \ref{rem:kineticg}. 
\begin{proof}[Direct proof of the kinetic energy inequality]
With any function $f:\R_+\to\R_+$ satisfying $\int_{\R_+}f^2\dd s=1$ one can write 
\begin{align*}
-\Delta=\int_0^\infty f(s/p^2)^2\dd s
\end{align*} 
where $p=-\ii\nabla$. By the Cauchy--Schwarz inequality and $0\le\gamma\le \id$ the operator inequality
\begin{align*}
\gamma
\le(1+\eps)f(s/p^2)\gamma f(s/p^2)+(1+1/\eps)(1-f(s/p^2))^2
\end{align*}
holds for all $\eps>0$. In particular, the diagonal $\gamma(x,x)$ of the kernel of $\gamma$ is bounded by the corresponding diagonal of the kernel on the right side. Optimising over $\eps$ (which may now depend on $x$), one obtains
\begin{align*}
\sqrt{\gamma(x,x)}&\le\sqrt{(f(s/p^2)\gamma f(s/p^2))(x,x)}+\sqrt{(1-f(s/p^2))^2(x,x)}
\end{align*}
The third square root can be computed explicitly to be of the form $c_fs^{d/4}$ with $c_f>0$. Putting everything together yields a bound of the desired form
\begin{align*}
\tr(-\Delta\gamma)&=\int_{\R^d}\int_0^\infty (f(s/p^2)\gamma f(s/p^2))(x,x)\dd s\dd x\\
&\ge \int_{\R^d}\int_0^\infty (\sqrt{\gamma(x,x)}-c_f s^{d/4})_+^2\dd s\dd x
=c_f^{-4/d}\frac{d^2}{(d+2)(d+4)}\int_{\R^d}\gamma(x,x)^{1+2/d}\dd x\,.
\end{align*}
The choice of $f$ can still be optimised over.
\end{proof}

The advance of direct proofs of the kinetic energy inequality has made it possible to establish semiclassical inequalities in new settings. For example, in \cite{Frank2011,Frank2013} Frank, Lewin, Lieb and Seiringer proved a kinetic energy inequality in the presence of a constant background density which by duality yields a generalisation of \eqref{eq:LT} for potentials converging to a positive constant at infinity. It also opened up a new research area of studying kinetic inequalities where the dual Lieb--Thirring inequalities do not even exist. This was initiated by the case of anyons in \cite{Lundholm2013}. We refer to Nam's review \cite{Nam2020} and references therein for details of these developments.
 
As we will discuss in Section \ref{subsec:state}, several of the currently best bounds on the constant $L_{\gamma,d}$ in \eqref{eq:LT} have been obtained through direct proofs of the kinetic energy inequality. 


\section{The Cwikel--Lieb--Rozenblum bound}\label{sec:CLR}
For $\gamma=0$ inequality \eqref{eq:LT} provides an upper bound on the number of negative eigenvalues of a Schr{\"o}dinger operator $-\Delta+V$. It was proved independently (and with different arguments) by Cwikel \cite{Cwikel1977}, Lieb \cite{Lieb1976b,Lieb1980} and Rozenblum \cite{Rozenblum1972} and is thus referred to as the CLR bound.
\begin{theorem}[The CLR bound]
Let $d\ge3$. There is a constant $L_{0,d}$ such that for all real-valued $V\in L^{d/2}(\R^d)$ the number $N_0$ of negative eigenvalues of $-\Delta+V$ satisfies
\begin{align*}
N_0\le L_{0,d}\int_{\R^d} V(x)_-^{d/2}\dd x\,.
\end{align*}
\end{theorem}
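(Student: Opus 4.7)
The plan is to follow the Birman--Schwinger strategy used to prove Theorem~\ref{th:LT}, but with one critical difference: for $\gamma=0$ the Laplace representation \eqref{eq:gammaNE} is vacuous, so no $e$-integration is available to absorb a divergence near $E=0$, and the shift $N_E(V)\le N_{E/2}(-(V-E/2)_-)$ is of no help. Reducing as before to $V\le 0$ with $V\in\mathcal{C}_0^\infty(\R^d)$ and letting $E\uparrow 0$ in the Birman--Schwinger principle, $N_0$ is bounded by the number of eigenvalues $\ge 1$ of the limiting operator
\[
K_0 = |V|^{1/2}(-\Delta)^{-1}|V|^{1/2}\,.
\]
This operator is well defined only when $d\ge 3$, since only then does $(-\Delta)^{-1}$ have a locally integrable Green's function $c_d|x-y|^{-(d-2)}$. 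The task therefore reduces to bounding the counting function $\#\{\mu_j(K_0)\ge 1\}$ by $C_d\int_{\R^d}|V|^{d/2}\dd x$.

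The naive trace bound $\#\{\mu_j(K_0)\ge 1\}\le \tr K_0^m$ forces $m=d/2$ by scaling, and for even $d$ one can in principle close the estimate via the Hardy--Littlewood--Sobolev inequality applied to iterated convolutions of the Green's function weighted by $|V|$. For general $d\ge 3$, however, the exponent $d/2$ is non-integer and traces must be replaced by weak Schatten (Lorentz) norms. The approach I would pursue is Cwikel's: factor $K_0=T^*T$ with $T=(-\Delta)^{-1/2}|V|^{1/2}$, note that the singular values of $T$ coincide with those of its adjoint $f(X)g(P)$ for $f(x)=|V(x)|^{1/2}$ and $g(p)=|p|^{-1}$, and invoke Cwikel's operator inequality
\[
\|f(X)g(P)\|_{L^{d,\infty}} \le C_d\|f\|_{L^d(\R^d)}\|g\|_{L^{d,\infty}(\R^d)}\,.
\]
Since $g$ lies in weak $L^d(\R^d)$ precisely when $d\ge 3$, this yields $K_0\in L^{d/2,\infty}$ with $\|K_0\|_{L^{d/2,\infty}}\le C'_d\|V\|_{L^{d/2}}$, after which the weak-type Chebyshev inequality $\#\{\mu_j(K_0)\ge 1\}\le \|K_0\|_{L^{d/2,\infty}}^{d/2}$ closes the argument.

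The hard part will be Cwikel's inequality itself. It is not a soft consequence of the Hilbert--Schmidt identity $\|f(X)g(P)\|_2^2=(2\pi)^{-d}\|f\|_2^2\|g\|_2^2$, but requires a genuine real-interpolation argument on the weak $L^p$ scale, with endpoint $p=\infty$ provided by the trivial bound $\|f(X)g(P)\|\le \|f\|_\infty\|g\|_\infty$. Two alternative routes lead to the same conclusion and avoid this interpolation altogether: Lieb's proof replaces $(-\Delta)^{-1}$ by its Feynman--Kac representation $\int_0^\infty e^{t\Delta}\dd t$, combines it with a Golden--Thompson style heat-kernel estimate, and optimises over $t$ using Jensen's inequality on Wiener measure; Rozenblum's original proof uses a variational decomposition of $L^2(\R^d)$ into adapted subspaces on which $-\Delta+V$ admits an explicit lower bound. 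Each method produces a different constant, reflecting the fact that at this endpoint exponent the problem is genuinely delicate and the sharp $L_{0,d}$ remains an open question.
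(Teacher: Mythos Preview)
The paper does not actually supply a proof of this theorem; it merely states the result and attributes it to the independent works of Cwikel, Lieb, and Rozenblum, later mentioning alternative proofs by Fefferman, Conlon, Li--Yau, and Frank. There is therefore nothing in the paper to compare your argument against.

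That said, your sketch is a faithful outline of Cwikel's approach via weak Schatten classes, and your closing paragraph accurately summarises Lieb's heat-kernel method and Rozenblum's variational decomposition. One small imprecision: the function $g(p)=|p|^{-1}$ lies in weak $L^d(\R^d)$ for \emph{every} $d\ge 1$, since $|\{|p|^{-1}>t\}|=c_d\,t^{-d}$. The restriction $d\ge 3$ does not come from membership of $g$ in $L^{d,\infty}$ but from the fact that Cwikel's weak-type bound $\|f(X)g(P)\|_{p,\infty}\le C_p\|f\|_{L^p}\|g\|_{L^{p,\infty}}$ is only available for $p>2$; with $p=d$ this forces $d>2$. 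You had already correctly located the dimensional restriction in the local integrability of the Green's function of $(-\Delta)^{-1}$, so this slip does not damage the argument.
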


Over the last 45 years new proofs of the CLR bound have been found by Fefferman \cite{Fefferman1983}, Conlon \cite{Conlon1985}, Li and Yau \cite{Li1983}, and Frank \cite{Frank2014}.  
However, Lieb's original argument \cite{Lieb1976b} still provides the smallest constants $L_{0,d}$ in dimensions $d=3,4$.  In dimensions $d\ge5$ Lieb's constants have only very recently been improved by Hundertmark, Kunstmann, Ried and Vugalter \cite{Hundertmark2018} using a refinement of Cwikel's approach.  A table of the currently best constants up to $d=9$ can be found in \cite[Table 1]{Hundertmark2018} (partly based on \cite[Table 3.1]{Roepstorff1994}) from where we cite the following upper bounds, presented in the notation of Remark \ref{rem:sc},
\begin{align*}
L_{0,3}\le 6.86924 L_{0,3}^{\mathrm{cl}}\text{ i.e. } R_{0,3}&\le 6.86924,\\
L_{0,4}\le 6.03398 L_{0,4}^{\mathrm{cl}}\text{ i.e. } R_{0,4}&\le 6.03398,\\
L_{0,5}\le 5.95405 L_{0,5}^{\mathrm{cl}}\text{ i.e. } R_{0,5}&\le 5.95405,\\
L_{0,6}\le 5.77058 L_{0,6}^{\mathrm{cl}}\text{ i.e. } R_{0,6}&\le 5.77058.
\end{align*}
For higher dimensions $d\ge7$ we note that $R_{0,d}\le5.77058$ as per \cite[Theorem 1.7]{Hundertmark2018}.

To obtain a lower bound on $L_{0,d}$ one can again consider the analogous one-particle problem i.e.~the problem of finding a constant $L_{0,d}^{\mathrm{one}}$ such that
\begin{align*}
 N_0=0 \quad \text{if} \quad L_{0,d}^{\mathrm{one}}\int_{\R^d} V(x)_-^{d/2}\dd x<1\,.
\end{align*}
Clearly $L_{0,d}\ge L_{0,d}^{\mathrm{one}}$. Similarly as in the case of $\gamma=1$, one can prove that the smallest $L_{0,d}^{\mathrm{one}}$ is related to the optimal constant in an uncertainty principle, more precisely $L_{0,d}^{\mathrm{one}}=S_d^{-d/2}$ in the Sobolev inequality
\begin{align}
\int_{\R^d}|(\nabla \psi)(x)|^2\dd x\ge S_d\left(\int_{\R^d}|\psi(x)|^{\frac{2d}{d-2}}\dd x\right)^{\frac{d-2}{d}}
\label{eq:Sobolev}
\end{align}
for functions $\psi\in L^2(\R^d)$ with $\nabla\psi\in L^2(\R^d)$. 
The optimal values of $S_d$ are known explicitly in all dimensions. In particular $L_{0,3}^{\mathrm{one}}=8L_{0,3}^{\mathrm{cl}}/\sqrt{3}$ and thus \begin{align*}
\frac{8}{\sqrt{3}} L_{0,3}^{\mathrm{cl}}\le L_{0,3}\le 6.86924L_{0,3}^{\mathrm{cl}}\text{ i.e. }
\frac{8}{\sqrt{3}} \le R_{0,3}\le 6.86924\,.
\end{align*}
Note that $8/\sqrt{3}\approx4.6188$. As we will discuss below, the optimal $L_{0,3}$ is conjectured to coincide with this lower bound.

While the CLR bound implies the Sobolev inequality, under certain assumptions in an abstract setting the converse is also true, as observed by Levin and Solomyak \cite{Levin1997}. 
More generally, Frank, Lieb and Seiringer \cite{Frank2010} showed a similar equivalence between Lieb--Thirring inequalities \eqref{eq:LT} and Gagliardo--Nirenberg inequalities. The authors used this to establish Lieb--Thirring inequalities for fractional Laplacians with Hardy term \cite{Frank2008}. A simpler proof of their results without employing this equivalence was later given by Frank \cite{Frank2009}. We remark that there is by now a large literature on Hardy--Lieb--Thirring inequalities for (fractional) Laplacians where a positive Hardy weight is subtracted from the operator, initiated by Ekholm and Frank \cite{Ekholm2006}. For a review, we refer to \cite{Frank2020b}. 


\section{The Lieb--Thirring constants $L_{\gamma,d}$}
\subsection{The conjecture}
Comparing the Lieb--Thirring inequality \eqref{eq:LT} to the Weyl asymptotics \eqref{eq:Weyl} one immediately obtains the lower bound
\begin{align*}
L_{\gamma,d}\ge L_{\gamma,d}^{\mathrm{cl}}\,.
\end{align*}
Equivalently, in the notation of Remark \ref{rem:sc}, $R_{\gamma,d}\ge 1$. 
As already observed above in the cases $\gamma=0$ and $\gamma=2$, a different lower bound can be obtained by investigating the best constant $L_{\gamma,d}^{\mathrm{one}}$ in the corresponding one-particle bound 
\begin{align*}
|E_1|^\gamma\le L_{\gamma,d}^{\mathrm{one}}\int_{\R^d} V(x)_-^{\gamma+d/2}\dd x
\end{align*}
concerning only the lowest eigenvalue. Clearly
\begin{align*}
L_{\gamma,d}\ge L_{\gamma,d}^{\mathrm{one}}\,.
\end{align*}
The variational problem for $L_{\gamma,d}^{\mathrm{one}}$ was first considered by Keller \cite{Keller1961}. Independently Lieb and Thirring \cite{Lieb1976}  showed the optimal constants $L_{\gamma,d}^{\mathrm{one}}$ are related to optimal constants in Gagliardo--Nirenberg inequalities (generalisations of \eqref{eq:GN} and \eqref{eq:Sobolev}).  
In one dimension $d=1$ they solved the variational problem explicitly. The optimal constant is
\begin{align*}
L_{\gamma,1}^{\mathrm{one}}=2\left(\frac{\gamma-1/2}{\gamma+1/2}\right)^{\gamma-1/2}L_{\gamma,1}^{\mathrm{cl}}
\end{align*}
which is, if $\gamma>1/2$, achieved for the P{\"o}schl--Teller potential \cite{Poschl1933}
\begin{align}
V_\gamma(x)=-\frac{(\gamma-1/2)(\gamma+1/2)}{\cosh(x)^2}\,.
\label{eq:cosh}
\end{align}
In their original paper \cite{Lieb1976}, Lieb and Thirring conjectured the following.

\begin{conjecture}[Lieb--Thirring conjecture]
The optimal $L_{\gamma,d}$ in \eqref{eq:LT} is given by
\begin{align*}
L_{\gamma,d}=\max(L_{\gamma,d}^{\mathrm{cl}},L_{\gamma,d}^{\mathrm{one}})\,.
\end{align*}
\end{conjecture}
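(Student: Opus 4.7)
Since the lower bound $L_{\gamma,d}\ge\max(L_{\gamma,d}^{\mathrm{cl}},L_{\gamma,d}^{\mathrm{one}})$ is already established by comparison with Weyl asymptotics and the one--bound--state problem, the plan is to attack the matching upper bound. My first step would be to invoke the Aizenman--Lieb monotonicity argument, which shows that the ratio $R_{\gamma,d}=L_{\gamma,d}/L_{\gamma,d}^{\mathrm{cl}}$ is non-increasing in $\gamma$ (obtained by writing $(V+e)_-^{\gamma'}$ as an integral transform of $(V+e+t)_-^{\gamma}$ and applying Fubini to \eqref{eq:LT}). Combined with the fact that $L_{\gamma,d}^{\mathrm{one}}/L_{\gamma,d}^{\mathrm{cl}}$ is decreasing in $\gamma$ and crosses the value $1$ at a unique exponent $\gamma_c(d)$, this would reduce the conjecture to proving the sharp bound \emph{at} a single critical exponent in each dimension, separating a "one--bound--state regime" $\gamma<\gamma_c(d)$ from a "semiclassical regime" $\gamma\ge\gamma_c(d)$.

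In the semiclassical regime I would proceed by the Laptev--Weidl operator--valued lifting: one proves a matrix--valued Lieb--Thirring inequality in one dimension with sharp constant $L_{\gamma,1}^{\mathrm{cl}}$, then integrates over the remaining $d-1$ coordinates to obtain the sharp constant in dimension $d$. For $\gamma\ge 3/2$ in $d=1$, this matrix--valued inequality is already known and the scheme succeeds; the present task is to push the sharp matrix--valued inequality down to the critical exponent $\gamma_c(1)$. I would try to adapt the commutation/trace--identity arguments underlying the $\gamma=3/2$ case, monitoring which steps lose sharpness and replacing them with variational computations against matrix--valued analogues of the P\"oschl--Teller potential \eqref{eq:cosh}. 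In the sub--critical regime $\gamma<\gamma_c(d)$, I would instead attempt a direct variational analysis: set up the optimisation of $\sum_{j\ge1}|E_j|^\gamma/\int V_-^{\gamma+d/2}\dd x$ using a concentration--compactness argument, derive the Euler--Lagrange equation for an extremiser, and try to show (via rearrangement and a nodal analysis of the corresponding eigenfunctions) that any extremiser must support only a single bound state, thereby collapsing the problem to $L_{\gamma,d}^{\mathrm{one}}$.

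The hard part will be twofold. First, no method is known that distinguishes between the two conjectured extremal profiles (semiclassical versus one--bound--state) \emph{a priori}; proving that all intermediate configurations are strictly worse is the real content of the conjecture and is where every existing technique -- Birman--Schwinger, coherent states, commutation methods, direct phase--space bounds via the kinetic inequality of Section~\ref{sec:kinetic} -- loses a multiplicative factor. Second, in dimensions $d\ge2$ the critical exponent $\gamma_c(d)$ is not known in closed form, and pinning down even its numerical value requires knowing $L_{\gamma_c,d}^{\mathrm{one}}$, which in turn depends on sharp constants in Gagliardo--Nirenberg inequalities that are not explicit outside of $d=1$. For this reason I would expect a full solution only after these two obstacles are addressed in parallel: a sharpened matrix--valued trace identity that survives below $\gamma=3/2$, and a rigidity theorem classifying extremisers of the associated Gagliardo--Nirenberg problem in general dimension.
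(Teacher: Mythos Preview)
The fundamental gap is that you are proposing a proof strategy for a statement that is \emph{not a theorem} in this paper and, more importantly, is known to be \emph{false} for large ranges of parameters. The paper does not prove the conjecture; it surveys its status. In particular, Section~\ref{subsec:state} records that Helffer and Robert showed $L_{\gamma,d}>L_{\gamma,d}^{\mathrm{cl}}$ for all $d\ge2$ and $0\le\gamma<1$, while Frank, Gontier and Lewin showed $L_{\gamma,d}>L_{\gamma,d}^{\mathrm{one}}$ whenever $\gamma>\max(0,2-d/2)$. Combining these, the conjecture is already disproved, for instance, in dimension $d=2$ for $1<\gamma\le\gamma_2^{\mathrm{c}}\approx 1.1654$, in dimensions $3\le d\le 7$ for all $\gamma$ between $\max(0,2-d/2)$ and $1$ (on both sides of $\gamma_d^{\mathrm{c}}$), and in dimensions $d\ge 8$ for all $0\le\gamma<1$. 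Any argument purporting to establish $L_{\gamma,d}\le\max(L_{\gamma,d}^{\mathrm{cl}},L_{\gamma,d}^{\mathrm{one}})$ uniformly in $\gamma$ and $d$ must therefore contain an error.

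Your outline reflects this: the reduction via Aizenman--Lieb to a single critical exponent works only in the semiclassical direction (if $R_{\gamma_0,d}=1$ then $R_{\gamma,d}=1$ for $\gamma\ge\gamma_0$), and says nothing about pinning $L_{\gamma,d}$ to $L_{\gamma,d}^{\mathrm{one}}$ below $\gamma_c(d)$. Your ``one--bound--state regime'' plan, to show by rearrangement and nodal analysis that any extremiser supports a single bound state, is precisely what the counterexamples above refute: in those ranges the optimal constant is strictly larger than $L_{\gamma,d}^{\mathrm{one}}$, so extremising sequences cannot collapse to a single--bound--state profile. Likewise, the hoped--for sharp matrix--valued inequality in $d=1$ below $\gamma=3/2$ cannot hold with constant $L_{\gamma,1}^{\mathrm{cl}}$, since lifting it would contradict the Helffer--Robert examples in $d\ge2$. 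The correct response to this ``statement'' is not a proof proposal but a discussion of where it holds (only $\gamma\ge 3/2$ in all $d$, and $\gamma=1/2$ in $d=1$), where it fails, and where it remains open.
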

\noindent In the notation of Remark \ref{rem:sc} the conjecture claims that $R_{\gamma,d}=\max(1,L_{\gamma,d}^{\mathrm{one}}/L_{\gamma,d}^{\mathrm{cl}})$. 
While the conjecture has been proved correct in some cases, it also known to fail in others. Before we review the results, we will remark on two special cases of the conjecture, which are considered especially important.

In dimension $d=1$ the explicit form of $L_{\gamma,1}^{\mathrm{one}}$ allows one to conclude that $\max(L_{\gamma,1}^{\mathrm{cl}},L_{\gamma,1}^{\mathrm{one}})$ coincides with $L_{\gamma,1}^{\mathrm{one}}$ for $1/2\le\gamma\le3/2$ and with $L_{\gamma,1}^{\mathrm{cl}}$ for $\gamma\ge3/2$. The Lieb--Thirring conjecture in one dimension can thus be rephrased as follows.
\begin{conjecture}[Lieb--Thirring conjecture: $d=1$] The optimal $L_{\gamma,1}$ in \eqref{eq:LT} is given by
\begin{align*}
L_{\gamma,1}=\begin{cases}
2\left(\frac{\gamma-1/2}{\gamma+1/2}\right)^{\gamma-1/2}L_{\gamma,1}^{\mathrm{cl}}\,,&1/2\le\gamma\le3/2\\
L_{\gamma,1}^{\mathrm{cl}}\,,&3/2\le\gamma\,.
\end{cases}
\end{align*}
\end{conjecture}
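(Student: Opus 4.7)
The plan is to split the conjecture at $\gamma=3/2$, where the conjectured extremiser transitions from a single bound state (P\"oschl--Teller) to the semiclassical regime, and to handle the two regimes with different tools. The \emph{lower bounds} are immediate: for $1/2\le\gamma\le3/2$ I would evaluate \eqref{eq:LT} on the potential $V_\gamma$ of \eqref{eq:cosh}, whose unique negative eigenvalue is $E_1=-(\gamma-1/2)^2$; a direct Beta-function computation of $\int_\R V_\gamma(x)_-^{\gamma+1/2}\dd x$ reproduces $L_{\gamma,1}^{\mathrm{one}}$ on the right-hand side. For $\gamma\ge 3/2$ the bound $L_{\gamma,1}\ge L_{\gamma,1}^{\mathrm{cl}}$ follows directly from the Weyl asymptotics \eqref{eq:Weyl}.

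For the upper bound when $\gamma\ge 3/2$, the anchor is the sharp inequality at $\gamma=3/2$, which I would prove via the original Lieb--Thirring trace identity: a Riccati-type substitution $u=\psi'/\psi$ on the negative-energy eigenfunctions yields an exact identity equating $\sum_j|E_j|^{3/2}$ to $L_{3/2,1}^{\mathrm{cl}}\int_\R V_-^2\dd x$ minus a manifestly nonnegative remainder, giving $L_{3/2,1}=L_{3/2,1}^{\mathrm{cl}}=3/16$. To pass to all $\gamma>3/2$ I would invoke the Aizenman--Lieb lifting trick: using the Beta-function identity
\begin{align*}
|E|^\gamma=\frac{1}{B(\gamma-3/2,\,5/2)}\int_0^\infty t^{\gamma-5/2}(|E|-t)_+^{3/2}\dd t\,,
\end{align*}
I apply the sharp $\gamma_0=3/2$ bound to the shifted potential $V+t$ under the integral and interchange the order of integration. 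The resulting quotient of Beta functions collapses exactly to $L_{\gamma,1}^{\mathrm{cl}}$, giving $R_{\gamma,1}\le R_{3/2,1}=1$, which matches the lower bound.

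For the upper bound at $\gamma=1/2$, I would follow the Hundertmark--Lieb--Thomas method. Via \eqref{eq:gammaNE} the quantity $\sum_j|E_j|^{1/2}$ is an integral in $e>0$ of $N_{-e}(V)$, and the Birman--Schwinger principle reduces this to counting eigenvalues of $K_{-e}$ above $1$. In one dimension the Green's function is the explicit positive kernel $(2\sqrt{e})^{-1}\e^{-\sqrt{e}|x-y|}$, which admits a rank-one factorisation that, combined with a careful application of the Cauchy--Schwarz inequality to the associated quadratic form, delivers the sharp estimate $L_{1/2,1}=1/2=L_{1/2,1}^{\mathrm{one}}$ once integrated over $e$.

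The principal obstacle is the open interval $1/2<\gamma<3/2$. Aizenman--Lieb lifting runs only upward in $\gamma$, so the sharp value at $\gamma=1/2$ yields only the suboptimal $R_{\gamma,1}\le 2$, and there is no downward counterpart from $\gamma=3/2$. A natural direct attack is a concentration-compactness analysis of the scale-invariant ratio $\sum_j|E_j|^\gamma\big/\int_\R V_-^{\gamma+1/2}\dd x$: rule out vanishing and dichotomy to extract a maximising potential, write the Euler--Lagrange equation, and show that for $\gamma<3/2$ every critical point with more than one bound state can be strictly improved, reducing the problem to the one-particle Keller variational problem whose solution is precisely $V_\gamma$. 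The hard step is this rigidity argument: the Euler--Lagrange equation couples all bound-state eigenfunctions through a single common potential, and no analogue of the $\gamma=3/2$ Riccati trace identity is presently known that would force collapse to a single eigenfunction throughout $1/2<\gamma<3/2$.
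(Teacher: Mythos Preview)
The statement is a \emph{conjecture}, and the paper does not claim to prove it: it explicitly says that the range $1/2<\gamma<3/2$ remains open. Your proposal correctly identifies this gap, correctly obtains the lower bounds (P\"oschl--Teller test potential, respectively Weyl asymptotics), and correctly handles $\gamma\ge3/2$ via the sharp $\gamma=3/2$ result followed by Aizenman--Lieb. On the large-scale structure you therefore agree with the paper, and your honest acknowledgement that the rigidity step in the proposed concentration-compactness attack is not known is accurate.

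Two of your sketches of the settled endpoints differ from the paper's account and are inaccurate as written. For $\gamma=3/2$, the paper does not use a ``Riccati-type substitution on the negative-energy eigenfunctions''; the nonnegative remainder in the Buslaev--Faddeev--Zaharov trace formula is the scattering term $\tfrac{3}{\pi}\int_0^\infty k^2\log|a(k)|\dd k$, whose sign comes from $|a(k)|\ge1$, a fact from one-dimensional scattering theory. The alternative elementary route (Benguria--Loss) is the commutation method, which strips eigenvalues one at a time---also not what you describe. For $\gamma=1/2$, the Hundertmark--Lieb--Thomas argument does \emph{not} pass through the counting-function integral \eqref{eq:gammaNE} (which is borderline at $e=0$ when $\gamma=1/2$) and is not a Cauchy--Schwarz argument on a rank-one factorisation. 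The essential mechanism, which your outline misses, is to replace $K_E$ by the renormalised operator $L_E=2\sqrt{|E|}\,K_E$, observe that the Birman--Schwinger principle then reads $\mu_j(L_{E_j})=2|E_j|^{1/2}$, and prove monotonicity of the \emph{partial eigenvalue sums} $S_N(L_E)$ in $E$; one then telescopes $\sum_{j\le n} 2|E_j|^{1/2}=\sum_{j\le n}\mu_j(L_{E_j})\le \tr L_{E_n}=\int_\R V_-\dd x$.
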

\noindent The conjecture has been proved for $\gamma=1/2$ and $\gamma\ge3/2$ but remains open in the parameter range $1/2<\gamma<3/2$. 

In dimensions $d\le7$ Frank, Gontier and Lewin \cite{Frank2020} proved that similarly $\max(L_{\gamma,d}^{\mathrm{cl}},L_{\gamma,d}^{\mathrm{one}})$ coincides with $L_{\gamma,d}^{\mathrm{one}}$ for $\gamma\le\gamma_{d}^{\mathrm{c}}$ and with $L_{\gamma,1}^{\mathrm{cl}}$ for $\gamma\ge\gamma_{d}^{\mathrm{c}}$ for some critical $\gamma_{d}^{\mathrm{c}}$.   If $d\ge 8$ then the maximum is always attained at $L_{\gamma,d}^{\mathrm{cl}}$. Numerical values of $\gamma_{d}^{\mathrm{c}}$ can be found in \cite{Frank2020} and are included in Figure \ref{fig:LTconjecture} below. In particular $\gamma_{d}^{\mathrm{c}}>1$ for $d\le2$ and $\gamma_{d}^{\mathrm{c}}<1$ for $3\le d\le 7$. From $L_{1,d}^{\mathrm{cl}}$ and $L_{1,d}^{\mathrm{one}}$ we can define $K_{d}^{\mathrm{cl}}$ and  $K_{d}^{\mathrm{one}}$ via \eqref{eq:LK}. By duality the latter can also be defined as the optimal constant in \eqref{eq:GN}. The results of Section \ref{sec:kinetic} imply that the Lieb--Thirring conjecture for $\gamma=1$ can then be phrased as follows.

\begin{conjecture}[Lieb--Thirring conjecture: dual version for $\gamma=1$] The optimal constant in \eqref{eq:kinetic} (and hence also in \eqref{eq:kineticas} and \eqref{eq:kineticg}) is given by
\begin{align*}
K_{d}=
\begin{cases}
K_{d}^{\mathrm{one}}\,,&d\le2\\
K_{d}^{\mathrm{cl}}\,,&d\ge3\,.\\
\end{cases}
\end{align*}
\end{conjecture}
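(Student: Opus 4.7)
The plan is to reduce this dual statement to the primal Lieb--Thirring conjecture for $\gamma=1$ and tackle the two dimensional regimes separately. By the duality \eqref{eq:LK} established in the proof of Theorem~\ref{th:kinetic}, $K_d = K_d^{\mathrm{cl}}$ is equivalent to $L_{1,d} = L_{1,d}^{\mathrm{cl}}$ and $K_d = K_d^{\mathrm{one}}$ is equivalent to $L_{1,d} = L_{1,d}^{\mathrm{one}}$. In each case the lower bound on $L_{1,d}$ is immediate from Weyl asymptotics or from testing against a single bound state, so only the matching upper bound needs to be established; by duality the same is true for $K_d$.

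For $d\ge3$, where the target is the semiclassical value $K_d^{\mathrm{cl}}$, the natural attack is an Aizenman--Lieb monotonicity argument at the level of $R_{\gamma,d}$: this ratio is non-increasing in $\gamma$, so it suffices to prove $R_{\gamma_0,d}=1$ for some $\gamma_0\le 1$ and pass to $\gamma=1$ by monotonicity. I would attempt a Laptev--Weidl-style lifting: build a matrix-valued one-dimensional Lieb--Thirring inequality at exponent $\gamma_0+(d-1)/2$ with sharp constant and integrate out the transverse coordinates one at a time. As a complementary route, I would revisit the Frank--Hundertmark--Jex--Nam direct proof of the kinetic inequality sketched in Section~\ref{sec:kinetic}, and optimise the free radial function $f$ so that the only loss, the Cauchy--Schwarz step for $\gamma\le\id$, becomes asymptotically sharp in the semiclassical limit where $\gamma(x,x)\to\infty$ on macroscopic scales.

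For $d\le 2$, where the target is the one-bound-state constant $K_d^{\mathrm{one}}$, the expected extremiser is a single P\"oschl--Teller-type eigenfunction (in $d=1$ the explicit \eqref{eq:cosh}). I would aim to show that the quotient $\sum_{j}\int|\nabla\psi_j|^2 / \int\rho^{1+2/d}$ on an orthonormal $N$-tuple is minimised at $N=1$. In $d=1$ one can work with the Euler--Lagrange system for $(\psi_1,\dots,\psi_N)$, combine symmetric decreasing rearrangement with the explicit one-body optimiser, and try to show that orthogonality forces any additional state into a strictly worse $L^{2+4/d}$ mass distribution. In $d=2$, where no explicit one-body optimiser is available, I would combine conformal invariance of the critical Gagliardo--Nirenberg problem with concentration-compactness to exclude multi-bump minimising sequences.

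The central obstacle is exactly what one expects for a 45-year-old conjecture: each reduction above lands on a problem that has itself resisted progress. The monotonicity approach in $d\ge3$ is stuck because no sharp classical endpoint below $\gamma=3/2$ is known, and in one dimension the conjecture for $1/2<\gamma<3/2$ is still open, covering precisely $\gamma=1$. In low dimensions the analysis of optimisers is complicated by the fact that arbitrarily small potentials already create a bound state, so the $N$-particle and one-particle regimes never decouple cleanly. A genuine proof would most likely require either a sharp matrix-valued one-dimensional inequality at $\gamma=1$, or a new rigidity identity extending Laptev--Weidl's commutation method below the $3/2$-threshold; without such an input I do not see how to close the gap between the current best constants and the conjectured sharp values.
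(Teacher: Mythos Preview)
This statement is a \emph{conjecture}, and the paper says so explicitly: immediately after stating it, the paper writes ``The conjecture is still open in all dimensions.'' There is no proof in the paper to compare your attempt against.

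Your proposal is not a proof either, and to your credit you acknowledge this in the final paragraph. The reductions you outline are correct and standard: the duality \eqref{eq:LK} does convert the problem into the primal Lieb--Thirring conjecture at $\gamma=1$; Aizenman--Lieb monotonicity would indeed give $R_{1,d}=1$ from any sharp endpoint $R_{\gamma_0,d}=1$ with $\gamma_0\le 1$; and Laptev--Weidl lifting would propagate a sharp matrix-valued one-dimensional bound to all $d$. But each of these lands on a problem that is itself open, exactly as you say. No sharp classical constant is known below $\gamma=3/2$, the one-dimensional conjecture is open on the whole interval $1/2<\gamma<3/2$, and no rigidity argument is known that forces the optimiser in \eqref{eq:kinetic} to be a single function when $d\le 2$.

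So your submission is an honest survey of possible attacks together with a correct assessment of why they are currently blocked, which is the appropriate thing to write for an open conjecture; but it should not be labelled a proof proposal.
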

\noindent The conjecture is still open in all dimensions. Of particular interest is the case $d=3$, as $K_{3}=K_{3}^{\mathrm{cl}}$ would imply that the right side in \eqref{eq:kineticas} is precisely the kinetic energy ansatz in Thomas--Fermi theory i.e.~that the quantum mechanical kinetic energy is always bounded from below by the former energy.


\subsection{Monotonicity properties of $L_{\gamma,d}$}
Two monotonicity properties of the optimal constants $L_{\gamma,d}$ have proved very useful in the investigation of the optimal values.  Firstly, the quotients $L_{\gamma,d}/L_{\gamma,d}^{\mathrm{cl}}=R_{\gamma,d}$ are non-increasing with respect to the parameter $\gamma$ as proved by Aizenman and Lieb \cite{Aizenman1978}. 
\begin{theorem}[Monotonicity in $\gamma$: The Aizenman--Lieb principle]\label{th:AL}
Let $d\ge1$ and $\gamma\ge0$ satisfy the assumptions of Theorem \ref{th:LT}. Then, for any $\gamma'\ge\gamma$
\begin{align*}
\frac{L_{\gamma',d}}{L_{\gamma',d}^{\mathrm{cl}}}
\le \frac{L_{\gamma,d}}{L_{\gamma,d}^{\mathrm{cl}}}
\quad\text{i.e.}\quad
R_{\gamma',d}\le R_{\gamma,d}\,.
\end{align*}
\end{theorem}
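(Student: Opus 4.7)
The plan is to exploit the Aizenman--Lieb trick: expressing $|E_j|^{\gamma'}$ as a weighted $t$-integral of shifted versions $(E_j+t)_-^\gamma$, applying the already-established inequality at parameter $\gamma$ to each shifted operator, and then recognising a second instance of the same identity on the right-hand side. The elementary identity at the heart of the argument is, for $\gamma'>\gamma\ge 0$ and any real $a$,
\begin{align*}
a_-^{\gamma'}=\frac{1}{B(\gamma'-\gamma,\gamma+1)}\int_0^\infty t^{\gamma'-\gamma-1}(a+t)_-^{\gamma}\dd t\,,
\end{align*}
which follows from the substitution $t=|a|s$ in the Beta integral whenever $a<0$, and is trivial otherwise. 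The case $\gamma'=\gamma$ is tautological, so I assume $\gamma'>\gamma$.

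Applying this identity to each negative eigenvalue $E_j$ of $-\Delta+V$ and summing gives
\begin{align*}
\sum_{j\ge 1}|E_j|^{\gamma'}=\frac{1}{B(\gamma'-\gamma,\gamma+1)}\int_0^\infty t^{\gamma'-\gamma-1}\sum_{j\ge 1}(E_j+t)_-^{\gamma}\dd t\,.
\end{align*}
The inner sum is precisely $\sum_k|\widetilde E_k(t)|^\gamma$ where $\widetilde E_k(t)$ are the negative eigenvalues of the shifted operator $-\Delta+(V+t)$. Since $\gamma$ satisfies the hypotheses of Theorem \ref{th:LT}, applying \eqref{eq:LT} to each $t$ yields the bound
\begin{align*}
\sum_{j\ge 1}(E_j+t)_-^{\gamma}\le L_{\gamma,d}\int_{\R^d}(V(x)+t)_-^{\gamma+d/2}\dd x\,.
\end{align*}

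I would then swap the order of integration by Fubini (all integrands being nonnegative) and use the same Beta-integral identity a second time, this time with $(\gamma,\gamma')$ replaced by $(\gamma+d/2,\gamma'+d/2)$, to obtain
\begin{align*}
\int_0^\infty t^{\gamma'-\gamma-1}(V(x)+t)_-^{\gamma+d/2}\dd t=B(\gamma'-\gamma,\gamma+d/2+1)\,V(x)_-^{\gamma'+d/2}\,.
\end{align*}
Combining everything produces a bound of the form \eqref{eq:LT} at parameter $\gamma'$ with constant
\begin{align*}
L_{\gamma',d}\le L_{\gamma,d}\cdot\frac{B(\gamma'-\gamma,\gamma+d/2+1)}{B(\gamma'-\gamma,\gamma+1)}\,.
\end{align*}

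The final step is a purely algebraic check. Using $B(a,b)=\Gamma(a)\Gamma(b)/\Gamma(a+b)$, the ratio of Beta functions simplifies to $\Gamma(\gamma+d/2+1)\Gamma(\gamma'+1)/(\Gamma(\gamma'+d/2+1)\Gamma(\gamma+1))$, which by the explicit formula for $L_{\gamma,d}^{\mathrm{cl}}$ stated in the introduction is exactly $L_{\gamma',d}^{\mathrm{cl}}/L_{\gamma,d}^{\mathrm{cl}}$. Dividing through gives $R_{\gamma',d}\le R_{\gamma,d}$ as claimed. There is no real obstacle here: the only delicate point is keeping track of the two uses of the same Beta identity (once to lift the eigenvalue sum, once to repackage the potential integral) and noticing that the mismatch in their normalising constants is precisely the ratio of the classical Weyl constants — which is what makes the monotonicity statement come out sharply in terms of $R_{\gamma,d}$ rather than $L_{\gamma,d}$.
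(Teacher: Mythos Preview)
Your proof is correct and follows essentially the same route as the paper's: both use the Beta-integral identity $a_-^{\gamma'}=C\int_0^\infty t^{\gamma'-\gamma-1}(a+t)_-^\gamma\dd t$, apply the Lieb--Thirring inequality at parameter $\gamma$ to the shifted problem, and then reassemble. The paper carries out the second step in phase-space form, whereas you compute the ratio of Beta functions explicitly and identify it with $L_{\gamma',d}^{\mathrm{cl}}/L_{\gamma,d}^{\mathrm{cl}}$; this is purely cosmetic.

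One small technical point worth tightening: you apply \eqref{eq:LT} directly to the potential $V+t$, but $V+t\notin L^{\gamma+d/2}(\R^d)$ for $t>0$, so Theorem~\ref{th:LT} as stated does not literally apply. The paper handles this by first invoking the variational principle to bound $(E_j+t)_-$ by the moduli of the negative eigenvalues of $-\Delta-(V+t)_-$, whose potential \emph{does} lie in $L^{\gamma+d/2}(\R^d)$. Your displayed inequality $\sum_j(E_j+t)_-^\gamma\le L_{\gamma,d}\int(V+t)_-^{\gamma+d/2}\dd x$ is of course correct once this intermediate step is inserted.
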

 As a consequence of the principle, if one can prove that for some $\gamma$ the optimal constant is 
 $R_{\gamma,d}=1$, then in fact $R_{\gamma',d}=1$ for all $\gamma'\ge\gamma$. 
 
\begin{proof}[Proof of Theorem \ref{th:AL}]
The principle is a consequence of the fact that for $\gamma<\gamma'$ and $E\le0$
\begin{align*}
|E|^{\gamma'}=C_{\gamma',\gamma}\int_0^\infty (E+s)_-^{\gamma} s^{\gamma'-\gamma-1}\dd s
\end{align*}
with some constant $C_{\gamma',\gamma}>0$. This can be verified by using the scaling properties of the quantities involved. As a consequence
\begin{align*}
\sum_{j\ge1}|E_j|^{\gamma'} =C_{\gamma',\gamma}\int_0^\infty\sum_{j\ge 1}(E_j+s)_-^{\gamma} s^{\gamma'-\gamma-1}\dd s\,.
\end{align*}
By the variational principle the number $(E_j+s)_-$ is smaller than the modulus of the $j$-th negative eigenvalue of the Schr{\"o}dinger operator with potential $-(V+s)_-$.  Applying the Lieb--Thirring inequality with this potential yields, in the notation of Remark \ref{rem:sc}, the desired 
\begin{align*}
\sum_{j\ge1}|E_j|^{\gamma'} 
&\le C_{\gamma',\gamma}\frac{R_{\gamma,d}}{(2\pi)^d}
\int_0^\infty\int_{\R^d}\int_{\R^d}(p^2+V(x)+s)_-^{\gamma} \dd x \dd p s^{\gamma'-\gamma-1}\dd s\\
&=\frac{R_{\gamma,d}}{(2\pi)^d} \int_{\R^d}\int_{\R^d}(p^2+V(x))_-^{\gamma'} \dd x \dd p\,.
\qedhere
\end{align*}
\end{proof}

A monotonicity type result also holds with respect to the dimension $d$.  The underlying argument was first introduced by Laptev \cite{Laptev1997} and then applied to Lieb--Thirring inequalities by Laptev and Weidl \cite{Laptev2000}. The main idea of this so-called Laptev--Weidl lifting argument is to ``split off'' dimensions of the Laplacian and include them in the potential. To this end the inequality has to be generalised to operator-valued potentials. With $\Hs $ denoting a separable Hilbert space, one can consider the operator $-\Delta\otimes\id_\Hs+V$ on $L^2(\R^d;\Hs )$. Here, $V$ is a function on $\R^d$ that takes values in the set of compact, self-adjoint operators on $\Hs$. One can then show (see \cite{Hundertmark2002,Hundertmark2000} for the endpoint cases) that for $d$ and $\gamma$ as in Theorem \ref{th:LT} there exists a constant $L_{\gamma,d}^{\mathrm{op}}$ (independent of $\Hs$) such that the negative eigenvalues $E_j$ of $-\Delta\otimes\id_\Hs+V$ satisfy
\begin{align}
\sum_{j\ge1} |E_j|^\gamma\le L_{\gamma,d}^{\mathrm{op}}\int_{\R^d}\tr_{\Hs } V(x)_-^{\gamma+d/2}\dd x\,,
\label{eq:LTop}
\end{align} 
provided the integral is finite. Equivalently, with $R_{\gamma,d}^{\mathrm{op}}=L_{\gamma,d}^{\mathrm{op}}/L_{\gamma,d}$,
\begin{align}
\tr_{L^2(\R^d;\Hs)}(-\Delta\otimes\id_\Hs+V)_-^\gamma\le \frac{R_{\gamma,d}^{\mathrm{op}}}{(2\pi)^d}\int_{\R^d}\int_{\R^d}\tr_{\Hs}(p^2+V(x))_-^\gamma\dd p\dd x\,.
\label{eq:LTopR}
\end{align}
Since the choice $\Hs=\C$ corresponds to the previously discussed scalar case, the bounds $L_{\gamma,d}\le L_{\gamma,d}^{\mathrm{op}}$ hold for all admissible $d$ and $\gamma$. Below we recall the proof of the Laptev--Weidl lifting argument in a version given by Hundertmark \cite{Hundertmark2002} that considers ``splitting off'' several dimensions at a time.

\begin{theorem}[Monotonicity in $d$: The Laptev--Weidl lifting argument]\label{th:LW}
Let $d\ge1$ and $\gamma\ge0$ satisfy the assumptions of Theorem \ref{th:LT}. Then, for $d'> d$
\begin{align*}
\frac{L_{\gamma,d'}^{\mathrm{op}}}{L_{\gamma,d'}^{\mathrm{cl}}}
\le \frac{L_{\gamma+d/2,d'-d}^{\mathrm{op}}}{L_{\gamma+d/2,d'-d}^{\mathrm{cl}}}
\frac{L_{\gamma,d}^{\mathrm{op}}}{L_{\gamma,d}^{\mathrm{cl}}}
\quad\text{i.e.}\quad
R_{\gamma,d'}^{\mathrm{op}}\le R_{\gamma+d/2,d'-d}^{\mathrm{op}}R_{\gamma,d}^{\mathrm{op}}\,,
\end{align*}
and in particular, if $d'-d$ and $\gamma$ also satisfy the assumptions of Theorem \ref{th:LT}, then
\begin{align*}
\frac{L_{\gamma,d'}^{\mathrm{op}}}{L_{\gamma,d'}^{\mathrm{cl}}}
\le \frac{L_{\gamma,d'-d}^{\mathrm{op}}}{L_{\gamma,d'-d}^{\mathrm{cl}}}
\frac{L_{\gamma,d}^{\mathrm{op}}}{L_{\gamma,d}^{\mathrm{cl}}}
\quad\text{i.e.}\quad
R_{\gamma,d'}^{\mathrm{op}}\le R_{\gamma,d'-d}^{\mathrm{op}}R_{\gamma,d}^{\mathrm{op}}\,.
\end{align*}
\end{theorem}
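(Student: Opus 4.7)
The plan is to nest two applications of the operator-valued Lieb--Thirring inequality \eqref{eq:LTop} by splitting the $d'$-dimensional Laplacian. First I would write $\R^{d'}=\R^d_x\times\R^{d'-d}_y$ and identify $L^2(\R^{d'};\Hs)$ with $L^2(\R^d;\mathcal{K})$ where $\mathcal{K}=L^2(\R^{d'-d};\Hs)$. Under this identification,
\begin{align*}
-\Delta_{d'}\otimes\id_\Hs+V=-\Delta_x\otimes\id_\mathcal{K}+W(x),\qquad W(x):=-\Delta_y\otimes\id_\Hs+V(\cdot,x),
\end{align*}
so the original operator is itself an operator-valued Schr\"odinger operator on $\R^d$ whose fibre-valued potential $W(x)$ is in turn an operator-valued Schr\"odinger operator in dimension $d'-d$.

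Next I would apply \eqref{eq:LTop} in dimension $d$ at parameter $\gamma$ (with Hilbert space $\mathcal{K}$ in place of $\Hs$) to get
\begin{align*}
\sum_{j\ge1}|E_j|^\gamma\le L_{\gamma,d}^{\mathrm{op}}\int_{\R^d}\tr_{\mathcal{K}}W(x)_-^{\gamma+d/2}\dd x\,,
\end{align*}
and then, for each fixed $x$, apply \eqref{eq:LTop} a second time in dimension $d'-d$ at the shifted parameter $\gamma+d/2$ to the fibre operator $W(x)$, yielding
\begin{align*}
\tr_{\mathcal{K}}W(x)_-^{\gamma+d/2}\le L_{\gamma+d/2,d'-d}^{\mathrm{op}}\int_{\R^{d'-d}}\tr_\Hs V(y,x)_-^{\gamma+d'/2}\dd y\,.
\end{align*}
Combining the two estimates via Fubini produces $L_{\gamma,d'}^{\mathrm{op}}\le L_{\gamma,d}^{\mathrm{op}}L_{\gamma+d/2,d'-d}^{\mathrm{op}}$. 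The multiplicative identity $L_{\gamma,d'}^{\mathrm{cl}}=L_{\gamma,d}^{\mathrm{cl}}L_{\gamma+d/2,d'-d}^{\mathrm{cl}}$ for the semiclassical constants is immediate from the Gamma-function expression for $L_{\gamma,d}^{\mathrm{cl}}$ (equivalently, by evaluating the $d'$-dimensional momentum integral in two stages), so dividing both sides by $L_{\gamma,d'}^{\mathrm{cl}}$ produces the first claimed bound.

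For the second inequality I would insert the Aizenman--Lieb monotonicity in the form $R_{\gamma+d/2,d'-d}^{\mathrm{op}}\le R_{\gamma,d'-d}^{\mathrm{op}}$ into the first one. The proof of Theorem \ref{th:AL} relies only on the scaling identity representing $|E|^{\gamma'}$ as an integral of $(E+s)_-^\gamma$ in $s$ and on the variational principle, both of which transfer verbatim to operator-valued potentials; the extra hypothesis that $\gamma$ and $d'-d$ satisfy the restrictions of Theorem \ref{th:LT} is precisely what makes this step available.

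I expect the main technical obstacle to be that $W(x)=-\Delta_y\otimes\id_\Hs+V(\cdot,x)$ is unbounded and not of the compact-valued form for which \eqref{eq:LTop} was originally stated, so one must first extend the operator-valued Lieb--Thirring inequality to potentials whose negative parts are trace-class on each fibre. This can be handled by approximating $V$ by bounded, compactly supported operator-valued potentials of finite rank and passing to the limit via monotone convergence. A related subtlety is the measurability of $x\mapsto\tr_{\mathcal{K}}W(x)_-^{\gamma+d/2}$ that Fubini requires, which follows from the measurable functional calculus applied to the strongly measurable fibre operators.
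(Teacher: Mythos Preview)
Your proposal is correct and follows essentially the same approach as the paper: split $\R^{d'}=\R^{d}\times\R^{d'-d}$, regard the operator as a Schr\"odinger operator on $\R^d$ with operator-valued potential $W(x)=-\Delta_y+V(x,\cdot)$, apply \eqref{eq:LTop} twice (first in dimension $d$ at exponent $\gamma$, then fibrewise in dimension $d'-d$ at exponent $\gamma+d/2$), and divide by the multiplicative identity $L_{\gamma,d'}^{\mathrm{cl}}=L_{\gamma,d}^{\mathrm{cl}}L_{\gamma+d/2,d'-d}^{\mathrm{cl}}$; for the second bound both you and the paper invoke the operator-valued Aizenman--Lieb principle. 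Your write-up is in fact slightly more careful than the paper's in that you work from the outset with an auxiliary Hilbert space $\Hs$ (so the conclusion genuinely bounds $L_{\gamma,d'}^{\mathrm{op}}$ rather than $L_{\gamma,d'}$) and you flag the approximation and measurability issues that the paper leaves implicit.
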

 As a consequence of the lifting argument, if one can prove that $R_{\gamma,1}^{\mathrm{op}}=1$ for some $\gamma\ge1/2$ then also
 \begin{align*}
1\le R_{\gamma,d}^{\mathrm{op}}\le R_{\gamma,d-1}^{\mathrm{op}} R_{\gamma,1}^{\mathrm{op}}= R_{\gamma,d-1}^{\mathrm{op}}\,.
\end{align*}
Thus, by induction, $R_{\gamma,d}^{\mathrm{op}}=1$ and hence $R_{\gamma,d}=1$ for all $d\ge 1$.

\begin{proof}[Proof of Theorem \ref{th:LW}]
With $d_1=d$ and $d_2=d'-d$ one decomposes $x=(x_1,x_2)$ and correspondingly  $\Delta=\Delta_{1}+\Delta_{2}$. The space $L^2(\R^{d'})$ is isomorphic to $L^2(\R^{d_1};\Hs)$ with $\Hs =L^2(\R^{d_2})$ and thus one can identify $-\Delta+V$ with $-\Delta_{1}\otimes\id+ W$ where $W$ is the operator-valued potential 
\begin{align*}
W(x_1)=-\Delta_{2}+V(x_1,\cdot)\,.
\end{align*}
Since $\tr_{L^2(\R^{d'})}(-\Delta+V)_-^\gamma
=\tr_{L^2(\R^{d_1};\Hs )}(-\Delta_{1}+W)_-^\gamma$ the Lieb--Thirring inequality \eqref{eq:LTop} yields
\begin{align*}
\tr_{L^2(\R^{d'})}(-\Delta+V)_-^\gamma
\le L_{\gamma,d_1}^{\mathrm{op}}\int_{\R^{d_1}}\tr_{\Hs}W(x_1)_-^{\gamma+d_1/2}\dd x_1\,.
\end{align*}
Noting that $\tr_{\Hs}W(x_1)_-^{\gamma+d_1/2}=\tr_{L^2(\R^{d_2})}(-\Delta_2+V(x_1,\cdot))_-^{\gamma+d_1/2}$
one can also apply \eqref{eq:LTop} to the integrand 
\begin{align*}
\tr_{\Hs}W(x_1)_-^{\gamma+d_1/2}\le L_{\gamma+d_1/2,d_2}^{\mathrm{op}}\int_{\R^{d_2}}V(x_1,x_2)_-^{\gamma+d_1/2+d_2/2}\dd x_2
\end{align*}
and thus $L_{\gamma,d'}^{\mathrm{op}}\le L_{\gamma,d_1}^{\mathrm{op}}L_{\gamma+d_1/2,d_2}^{\mathrm{op}}$.  By direct computation $L_{\gamma,d'}^{\mathrm{cl}}= L_{\gamma,d_1}^{\mathrm{cl}}L_{\gamma+d_1/2,d_2}^{\mathrm{cl}}$, which proves the first result of the theorem.

The second result can then either be obtained from the first by noting that the Aizenman--Lieb principle extends to the operator-valued setting or be proved directly by a similar argument as above using \eqref{eq:LTopR}.
\end{proof}
 

\subsection{Proof of the conjecture for $d\ge 1$ and $\gamma\ge 3/2$}
The Lieb--Thirring conjecture is true in any dimension $d\ge 1$ for $\gamma\ge 3/2$. The proof showcases the usefulness of the monotonicity properties above. 

The case $d=1$ and $\gamma=3/2$ predates Lieb and Thirring's original paper and was already established by Gardner, Greene, Kruskal and Miura \cite{Gardner1961}. They relied on the Buslaev--Faddeev--Zaharov trace formula \cite{Buslaev1960,Zaharov1971} 
\begin{align*}
\sum_{j\ge1} |E_j|^{3/2}+\frac{3}{\pi}\int_{\R_+}k^2 \log |a(k)|\dd k=\frac{3}{16}\int_{\R} V(x)^2\dd x\,,
\end{align*}
for the negative eigenvalues $E_j$ of the Schr{\"o}dinger operator $-\frac{\mathrm{d}^2}{\mathrm{d}x^2}+V$. Noting that the scattering coefficient $a(k)$ (which coincides with the perturbation determinant) satisfies $|a(k)|\ge 1$, the authors of \cite{Gardner1961} obtained a Lieb--Thirring inequality with the constant $L_{3/2,1}=3/16$. This constant coincides with $L_{3/2,1}^{\mathrm{cl}}$, proving that the bound is sharp and that the Lieb--Thirring conjecture holds for these parameters. 
Lieb and Thirring \cite{Lieb1976} recovered this result and showed that the inequality becomes an equality for the potential \eqref{eq:cosh}. They also extended $L_{\gamma,1}=L_{\gamma,1}^{\mathrm{cl}}$ to all $\gamma=3/2+k$ with $k\in\N$. Using their principle (Theorem \ref{th:AL}), Aizenman and Lieb \cite{Aizenman1978} subsequently extended the result to all $\gamma\ge3/2$ proving the conjecture in one dimension for this parameter range.

Laptev and Weidl \cite{Laptev2000} later considered the operator-valued setting described in the previous subsection and generalised the above trace formula to the case $\Hs=\C^n$, i.e.~to the Hamiltonian $-\frac{\mathrm{d}^2}{\mathrm{d}x^2}\otimes\id_{\C^n}+V$ on $L^2(\R;\C^n)$. Using again the positivity of the term involving the scattering coefficient, they obtained the sharp inequality
\begin{align}
\sum_{j\ge 1} |E_j|^{3/2}\le\frac{3}{16}\int_{\R}\tr_{\Hs}V(x)^2\dd x\,
\label{eq:LTLW}
\end{align}
first for the case $\Hs=\C^n$ and subsequently for general separable Hilbert spaces $\Hs$. Thus $R_{3/2,1}^{\mathrm{op}}=1$ and from their lifting argument (Theorem \ref{th:LW}) Laptev and Weidl could then conclude that $R_{3/2,d}=1$ for all $d\ge1$. By the Aizenman--Lieb principle the conjecture $R_{\gamma,d}=1$ thus holds for all $\gamma\ge3/2$.

The trace formula mentioned above and its extensions to higher powers of the eigenvalues were historically of great importance in establishing integrability of the Korteweg--De Vries equation. They have also proved very useful in questions relating to the absolute continuity of the spectrum of Schr{\"o}dinger operators \cite{Deift1999,Killip2009}.   
An elegant direct proof of \eqref{eq:LTLW} without first establishing the trace formula was given by Benguria and Loss \cite{Benguria2000} soon after the work of Laptev and Weidl. The argument relies on the so-called commutation method to successively remove eigenvalues from the spectrum of the operator. This method has also been applied to establish sharp Lieb--Thirring type inequalities for Schr{\"o}dinger operators on the half-axis with Robin boundary condition \cite{Exner2014}, for  fourth order operators \cite{Hoppe2006} and for discrete Jacobi operators \cite{Schimmer2015}. For the latter operators, corresponding trace formulae were established by Deift and Killip \cite{Deift1999} as well as Killip and Simon \cite{Killip2003}. Using an approximation argument \cite{Schimmer2015}, the Lieb--Thirring bound in the discrete setting can be used to provide an alternative proof of $R_{3/2,1}=1$.


\subsection{Proof of the conjecture for $d=1$ and $\gamma=1/2$}\label{sec:WHLT}
The Lieb--Thirring inequality \eqref{eq:LT} in the endpoint case $d=1,\gamma=1/2$ was first proved by Weidl \cite{Weidl1996},  20 years after the original paper by Lieb and Thirring. His argument showed that $L_{1/2,1}\le 1.005$. Subsequently Hundertmark, Lieb and Thomas \cite{Hundertmark1998} provided a proof that included the sharp constant $L_{1/2,1}=L_{1/2,1}^{\mathrm{one}}=1/2$, i.e. $R_{1/2,1}=2$ thus proving the conjecture for these parameters. To date, this is the only case where the sharp value of $L_{\gamma,d}>L_{\gamma,d}^{\mathrm{cl}}$ is known. We recall their proof below. 

\begin{theorem}\label{th:LT12}
For all real-valued $V\in L^{1}(\R)$ the negative eigenvalues $E_j$ of the operator $-\DD{x}+V$ satisfy
\begin{align*}
\sum_{j\ge 1}|E_j|^{1/2}\le\frac12\int_{\R} V(x)_-\dd x\,.
\end{align*}
\end{theorem}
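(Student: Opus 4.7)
The plan is to reproduce the proof of Hundertmark, Lieb and Thomas \cite{Hundertmark1998}. The strategy combines the Birman--Schwinger principle with the explicit form of the one-dimensional resolvent and a sharp operator-theoretic trace bound that exploits an integral decomposition of the resolvent kernel.

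First I would perform standard reductions: by density and monotonicity of the eigenvalues in $V$, it suffices to take $V=-W$ with $W\in\mathcal{C}_0^\infty(\R)$, $W\ge0$. The layer-cake identity
\begin{align*}
|E_j|^{1/2}=\frac12\int_0^\infty s^{-1/2}\chi_{\{|E_j|>s\}}\dd s
\end{align*}
then yields, after summing over $j$,
\begin{align*}
\sum_{j\ge1}|E_j|^{1/2}=\frac12\int_0^\infty \frac{N_{-e}(V)}{\sqrt{e}}\dd e,
\end{align*}
reducing the problem to a sharp integrated bound for the counting function $N_{-e}(V)$.

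Next I would apply the Birman--Schwinger principle of the introduction: $N_{-e}(V)$ equals the number of eigenvalues of $K_e=W^{1/2}(-\partial_x^2+e)^{-1}W^{1/2}$ strictly above $1$. In dimension one the Green's function is explicit, so
\begin{align*}
K_e(x,y)=\frac{W(x)^{1/2}W(y)^{1/2}}{2\sqrt{e}}\,\e^{-\sqrt{e}|x-y|}.
\end{align*}
A naive bound $N_{-e}(V)\le\tr K_e=\int W/(2\sqrt{e})$ would produce a logarithmic divergence at $e=0$ in the representation above, so one cannot proceed by crude trace estimates; the sharp constant must come from exploiting the off-diagonal structure.

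The core step, and the main obstacle, is to establish that
\begin{align*}
\int_0^\infty \frac{N_{-e}(V)}{\sqrt{e}}\dd e\le \int_{\R}W(x)\dd x.
\end{align*}
To do so I would write the resolvent via its Fourier representation, $(-\partial_x^2+e)^{-1}=\int_\R \frac{|\e^{\ii k\cdot}\rangle\langle \e^{\ii k\cdot}|}{2\pi(k^2+e)}\dd k$, so that $K_e$ is expressed as a continuous convex combination of rank-one operators $|W^{1/2}\e^{\ii k\cdot}\rangle\langle W^{1/2}\e^{\ii k\cdot}|$, each of norm $\int W$. The Hundertmark--Lieb--Thomas trace lemma then asserts that for such a representation, the number of eigenvalues of $K_e$ above $1$ is controlled by an integral involving only the rank-one pieces. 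Integrating the resulting inequality against $\dd e/\sqrt{e}$, using the explicit value
\begin{align*}
\int_0^\infty \frac{1}{k^2+e}\,\frac{\dd e}{\sqrt{e}}=\frac{\pi}{|k|},
\end{align*}
and regrouping yields the claimed factor $\int W$ with no remainder, delivering the sharp constant $1/2$. The delicate point is that this operator inequality must be applied at the level of the spectral counting function rather than of trace powers; this is where the 1D specific nature of the Green's kernel---namely that it factors through a single-variable exponential---enters decisively, and extracting the precise numerical constant $1/2$ from the combination of the Fourier representation and the counting lemma is the crux of the argument.
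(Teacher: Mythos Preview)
Your proposal has a genuine gap: the route you sketch is \emph{not} the Hundertmark--Lieb--Thomas argument, and the ``trace lemma'' you invoke does not exist in the form you describe. You propose to bound the counting-function integral
\[
\int_0^\infty \frac{N_{-e}(V)}{\sqrt{e}}\dd e\le \int_\R W,
\]
via the Fourier decomposition of $K_e$ into rank-one pieces and the identity $\int_0^\infty (k^2+e)^{-1}e^{-1/2}\dd e=\pi/|k|$. But there is no inequality in \cite{Hundertmark1998} (or anywhere) that controls the number of eigenvalues of $K_e$ above $1$ in terms of its rank-one constituents in a way that, after integration in $e$, produces $\int W$. Indeed, any linear superposition argument applied naively to the Fourier decomposition yields, after the $e$-integration, a $k$-integral of $\pi/(2\pi|k|)$ times $\int W$, which diverges. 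The counting-function representation, while correct as an identity, is a dead end here.

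The actual proof (in the paper and in \cite{Hundertmark1998}) never passes through $N_{-e}$. Instead one introduces the \emph{normalised} Birman--Schwinger operator $L_E=G_E(0)^{-1}K_E=2\sqrt{|E|}\,K_E$, for which the Birman--Schwinger principle reads $\mu_j(L_{E_j})=2|E_j|^{1/2}$. The key lemma is the monotonicity of \emph{partial eigenvalue sums}: $S_N(L_E)\le S_N(L_{E'})$ for $E\le E'<0$. This majorisation follows because $L_E$ is an average of unitary conjugates of $L_{E'}$ (the convolution-semigroup property of the Cauchy distribution), whence Ky-Fan's inequality applies. One then iterates directly on the ordered eigenvalues,
\[
\sum_{j=1}^n 2|E_j|^{1/2}=\sum_{j=1}^n\mu_j(L_{E_j})\le\sum_{j=1}^n\mu_j(L_{E_n})\le\tr L_{E_n}=\int_\R W,
\]
and lets $n\to\infty$. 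The sharp constant $\tfrac12$ comes from $G_E(0)^{-1}=2\sqrt{|E|}$, not from any $e$-integration. Your proposal is missing precisely this majorisation/iteration structure, which is the entire content of the argument.
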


\begin{proof}
The kernel $G_E(x-y)$ of the resolvent $(-\frac{\mathrm{d}^2}{\mathrm{d}x^2}-E)^{-1}$ has a singularity $G_{E}(0)=1/(2\sqrt{|E|})$ on the diagonal that leads to a diverging integral in the original proof of Theorem \ref{th:LT}. This motivates to consider a modified Birman--Schwinger operator 
\begin{align*}
L_E=G_E(0)^{-1}K_{E}\,.
\end{align*}
From the corresponding properties of $K_E$ one immediately obtains
\begin{enumerate}[(i)]
\item $L_E\ge0$ is compact with eigenvalues $\mu_1(L_E)\ge\mu_2(L_E)\ge\dots\ge0$,
\item $\mu_j(L_{E_j})=G_{E_j}(0)^{-1}=2|E_j|^{1/2}$ \emph{(Birman--Schwinger principle)}.
\end{enumerate}
While monotonicity $L_{E}< L_{E'}$ is not possible (since both operators have the same trace), Lieb, Hundertmark and Thomas observed that it is sufficient to replace the monotonicity (iii) of individual eigenvalues by monotonicity of  partial eigenvalue sums $S_N(L_E)=\sum_{j\ge1}^N \mu_j(L_{E})$, i.e.~by
\begin{enumerate}[(i)$'$] \setcounter{enumi}{2}
\item$S_N(L_{E})\le S_N(L_{E'})$ for $E\le E'\le0$ and $N\in\N$. 
\end{enumerate}
Assuming this property to be true, one observes that
\begin{align*}
G_{E_1}(0)^{-1}&=\mu_1(L_{E_1})\le\mu_1(L_{E_2})\,,\\
\sum_{j=1}^2G_{E_j}(0)^{-1}&=\sum_{j=1}^2\mu_j(L_{E_j})\le\sum_{j=1}^2\mu_j(L_{E_2})
\le\sum_{j=1}^2\mu_j(L_{E_3})\,. 
\end{align*}
Continuing in this manner, one obtains for any $n\ge1$
\begin{align*}
\sum_{j=1}^nG_{E_j}(0)^{-1}\le\sum_{j=1}^n\mu_j(L_{E_n})\le\tr_{L^2(\R)} L_{E_n}=\int_{\R} |V(x)|\dd x
\end{align*}
which implies the desired result. The monotonicity (iii)$'$ was proved in \cite{Hundertmark1998} by means of an approximation argument and careful analysis of the dependence of the kernel on the spectral parameter $E$. An alternative proof was given later by Hundertmark, Laptev and Weidl \cite{Hundertmark2000} who noted that
\begin{align}
L_E=\int_{\R}U(p)^*L_0U(p) \widehat{g}_E(p)\dd p
\label{eq:LE}
\end{align}
where $L_0$ is the rank-one operator with kernel $\sqrt{|V(x)|}\sqrt{|V(y)|}$, $U(p)$ is the multiplication operator $\e^{-2\pi\ii  x p }$ and $\widehat{g}_E(p)=G_E(0)^{-1}((2\pi p)^2-E)^{-1}$ is the Fourier transform of $G_E(0)^{-1}G_E(x)$. 
Using the explicit form $G_{E}(0)^{-1}G_E(x)=\e^{-\sqrt{|E|}x}$ one obtains for $E<E'<0$ the convolution property
\begin{align}
\widehat{g}_E*\widehat{g}_{E,E'}=\widehat{g}_{E'}\,,
\label{eq:conv}
\end{align}
where $\widehat{g}_{E,E'}=\widehat{g}_{-(\sqrt{|E|}-\sqrt{|E'|})^2}$ . Importantly all three functions are probability densities, i.e.~they are non-negative and integrate to one, with the latter being a consequence of the factor $G_E(0)^{-1}$. Using that $U(p+p')=U(p)U(p')$ one obtains
\begin{align*}
L_E=\int_{\R}U(p)^*L_{E'}U(p) \widehat{g}_{E,E'}(p)\dd p
\end{align*}
from \eqref{eq:LE}. The monotonicity (iii)$'$ then follows from Ky-Fan's inequality, which states that $S_N$ satisfies a triangle inequality.
\end{proof}

The inequality becomes an equality for delta potentials $c\delta$ with $c<0$. The result was extended to operator-valued potentials by Hundertmark, Laptev and Weidl \cite{Hundertmark2000}. The convolution property \eqref{eq:conv} may be phrased as the well known fact that the Cauchy distribution is a convolution semigroup. Similar properties also hold for the resolvent kernels of discrete Jacobi operators \cite{Laptev2021} as well as certain functional difference operators \cite{Laptev2021b}, which can be used to establish monotonicity of partial eigenvalue sums and thus also sharp Lieb--Thirring type inequalities for these operators. In the former case, the monotonicity and ensuing bound had also been proved earlier by Hundertmark and Simon \cite{Hundertmark2002b}  following the arguments of \cite{Hundertmark1998}. Using this discrete result together with an approximation argument \cite{Schimmer2015} yields an alternative proof  of Theorem \ref{th:LT12}.


\subsection{The state of the conjecture}\label{subsec:state}
Figure \ref{fig:LTconjecture} visualises the current state of the conjecture. The previously discussed cases $d=1,\gamma=1/2$ and $d\ge1,\gamma\ge3/2$ are the only parameter regions where the conjecture is proved and the only regions where the optimal constant is known. 

Helffer and Robert \cite{Helffer1990} constructed examples that show
\begin{align*}
L_{\gamma,d}>L_{\gamma,d}^{\mathrm{cl}} \quad\text{if} \quad d\ge 2\text{ and } 0\le\gamma<1\,.
\end{align*}
Note that this inequality also holds for $d=1$ if $1/2\le\gamma<3/2$ since in this parameter range by explicit computation $L_{\gamma,1}^{\mathrm{one}}>L_{\gamma,1}^{\mathrm{cl}}$.
Recalling the definition of $\gamma_{d}^{\mathrm{c}}$ and noting that $0<\gamma_{d}^{\mathrm{c}}<1$ for $3\le d\le 7$,  Helffer and Robert's results establish that the conjecture fails in dimensions $3\le d\le 7$ if $\gamma_{d}^{\mathrm{c}}\le\gamma<1$ as well as in dimensions $d\ge 8$ if $0\le\gamma<1$. Glaser, Grosse and Martin \cite{Glaser1978} proved that 
\begin{align*}
L_{0,d}>L_{0,d}^{\mathrm{one}}\quad \text{if}\quad d\ge 7,
\end{align*}
which shows that the conjecture fails in dimensions $d\ge7$ if $\gamma=0$. More recently Frank, Gontier and Lewin \cite{Frank2020} showed that also
\begin{align*}
L_{\gamma,d}>L_{\gamma,d}^{\mathrm{one}}\quad \text{if}\quad 
d\ge 1\text{ and } \gamma>\max(0,2-d/2)
\end{align*}
disproving the conjecture in dimension $d=2$ if $1<\gamma\le\gamma_{2}^{\mathrm{c}}$, in dimension $d=3$ if $1/2<\gamma<\gamma_{3}^{\mathrm{c}}$ and in dimensions $4\le d\le 7$ if $0<\gamma<\gamma_{d}^{\mathrm{c}}$. 
Numerical discussions of the validity of the conjecture can be found in an appendix to Lieb and Thirring's original paper \cite{Lieb1976} by Barnes, in Levitt's \cite{Levitt2014} and most recently in \cite{Frank2021c} by Frank, Gontier and Lewin. 

We conclude by briefly reviewing the currently best bounds on the Lieb--Thirring constants. The results are again visualised in Figure \ref{fig:LTconjecture}. 
In one dimension $d=1$ for $\gamma<1$ the currently best bounds on $L_{\gamma,1}$ can be obtained by applying the  Aizenman--Lieb principle (Theorem \ref{th:AL}) to the sharp result at $\gamma=1/2$
\begin{align*}
L_{\gamma,1}\le 2L_{\gamma,1}^{\mathrm{cl}}\text{ i.e. }R_{\gamma,1}\le 2\quad\text{if}\quad 1/2\le\gamma<1\,.
\end{align*}
Better bounds are known for $\gamma\ge1$ as Frank, Hundertmark, Jex and Nam \cite{Frank2021b} used their proof method described at the end of Section \ref{sec:kinetic} (together with an additional argument that improves the step where the Cauchy--Schwarz inequality was applied) to establish that $K_{1}\ge 0.471851 K_1^{\mathrm{cl}}$ or equivalently  $L_{1,1}\le 1.456 L_{1,1}^{\mathrm{cl}}$. Thus, by the Aizenman--Lieb principle, 
\begin{align*}
L_{\gamma,1}\le 1.456 L_{\gamma,1}^{\mathrm{cl}}\text{ i.e. }R_{\gamma,1}\le 1.456\quad\text{if}\quad 1\le\gamma<3/2\,.
\end{align*}
The authors of \cite{Frank2021b} also extended their bound to the operator-valued setting, $L_{1,1}^{\mathrm{op}}\le 1.456 L_{1,1}^{\mathrm{cl}}$. As a consequence of the Laptev--Weidl lifting argument (Theorem \ref{th:LW}) and the Aizenman--Lieb principle they then obtained
\begin{align*}
L_{\gamma,d}\le 1.456L_{\gamma,1}^{\mathrm{cl}}\text{ i.e. }R_{\gamma,d}\le 1.456\quad\text{if}\quad d\ge1\text{ and }1\le\gamma<3/2\,,
\end{align*}
as well as
\begin{align*}
L_{\gamma,d}\le 2\cdot1.456L_{\gamma,1}^{\mathrm{cl}}\text{ i.e. }R_{\gamma,d}\le 2\cdot 1.456\quad\text{if}\quad d\ge2\text{ and }1/2\le\gamma<1\,,
\end{align*}
which currently constitute the best bounds in these parameter regions.  
With regards to the conjecture in its dual version for $\gamma=1$ we point out that the above bounds imply that $K_d\ge(0.471851)^{1/d}K_{d}^{\mathrm{cl}}$. In particular $K_3\ge 0.778517 K_3^{\mathrm{cl}}$ as well as $K_1\ge 0.629134 K_1^{\mathrm{one}}$ since $K_1^{\mathrm{cl}}=\pi^2/3$ and $K_1^{\mathrm{one}}=\pi^2/4$. 

The results of \cite{Frank2021b} are only the last in a long line of consecutive improvements of the Lieb--Thirring inequalities. The constants $L_{1,d}$ and $K_d$ have an especially rich history with successive advances made by Lieb and Thirring \cite{Lieb1975}, Lieb \cite{Lieb1980,Lieb1984}, Eden and Foias~\cite{Eden1991}, Blanchard and Stubbe \cite{Blanchard1996}, Hundertmark, Laptev and Weidl \cite{Hundertmark2000} as well as Dolbeault, Laptev and Loss \cite{Dolbeault2008} preceding the results of \cite{Frank2021b}. 

For $d\ge 3$ and $0\le\gamma<1/2$ bounds on $L_{\gamma,d}$ can be obtained by applying the Aizenman--Lieb principle to the best constants in the CLR bound discussed in Section \ref{sec:CLR}. For $d=2$ and $0<\gamma<1/2$ bounds were obtained in \cite{Lieb1976}.

Finally, we remark here that Lieb--Thirring inequalities can be extended to magnetic Schr{\"o}dinger operators, where $-\Delta$ is replaced by $(-\ii\nabla+A)^2$ with a magnetic vector potential $A\in L^2_{\mathrm{loc}}(\R^d;\R^d)$. By the diamagnetic inequality, the proof of Theorem \ref{th:LT} in Section \ref{sec:intro} can immediately be generalised to this case with the same constants.  
It is still unclear however, whether the optimal constant $L_{\gamma,d}$ in \eqref{eq:LT} can be used in a Lieb--Thirring bound with arbitrary $A\in L^2_{\mathrm{loc}}(\R^d;\R^d)$. Since in dimension $d=1$ magnetic fields can be gauged away,  all currently best constants that were achieved by applying the Laptev--Weidl lifting argument to a bound in one dimension remain true in the magnetic case \cite{Laptev2000}. As observed by Avron, Herbst and Simon \cite{Avron1978}, the same is true for Lieb's constants in the CLR bound.

\begin{figure}[!th]
\begin{center}
\begin{tikzpicture}[scale=1]
\draw[->] (0,0)--(10,0) node[below]{$\gamma$};
\draw[->](0,-0.1)--(0,9) node[left]{$d$};

\draw[-,dotted](10,1)--(0,1) node[left]{$1$\,};
\draw[-,dotted](10,2)--(0,2) node[left]{$2$\,};
\draw[-,dotted](10,3)--(0,3) node[left]{$3$\,};
\draw[-,dotted](10,4)--(0,4) node[left]{$4$\,};
\draw[-,dotted](10,5)--(0,5) node[left]{$5$\,};
\draw[-,dotted](10,6)--(0,6) node[left]{$6$\,};
\draw[-,dotted](10,7)--(0,7) node[left]{$7$\,};
\draw[-,dotted](10,8)--(0,8) node[left]{$\ge8$\,};
\node[below] at (0,-0.05) {$0$};

\draw[-,dotted](2,8)--(2,0) node[below]{$1/2$};
\draw[-,dotted](4,8)--(4,0) node[below]{$1\vphantom{\setminus}$};
\draw[-,dotted](6,8)--(6,0) node[below]{$3/2$};

\draw[-,line width=0.15cm,ColFail] (4*1,2)--(4*1.1654,2);
\fill[ColFail] (4*1.1654,2) circle (0.15cm);

\draw[-,line width=0.15cm,ColFail] (4*0.5,3)--(4*1,3);

\draw[-,line width=0.15cm,ColFail] (4*0,4)--(4*1,4);

\draw[-,line width=0.15cm,ColFail] (4*0,5)--(4*1,5);

\draw[-,line width=0.15cm,ColFail] (4*0,6)--(4*1,6);

\draw[-,line width=0.15cm,ColFail] (4*0,7)--(4*1,7);
\fill[ColFail] (4*0,7) circle (0.15cm);

\draw[-,line width=0.15cm,ColFail] (4*0,8)--(4*1,8);
\fill[ColFail] (4*0,8) circle (0.15cm);       

\draw[-,line width=0.15cm,ColOpen] (4*0.5,1)--(4*1.5,1);

\fill[ColOpen] (4*1,2) circle (0.15cm);    
\draw[-,line width=0.15cm,ColOpen] (4*0,2)--(4*1,2);
\draw[-,line width=0.15cm,ColOpen] (4*1.1654,2)--(4*1.5,2);

\fill[ColOpen] (4*0,3) circle (0.15cm);       
\fill[ColOpen] (4*1,3) circle (0.15cm);   
\draw[-,line width=0.15cm,ColOpen] (4*0,3)--(4*0.5,3); 
\fill[ColOpen] (4*0.5,3) circle (0.15cm);   
\draw[-,line width=0.15cm,ColOpen] (4*1,3)--(4*1.5,3); 

\fill[ColOpen] (4*0,4) circle (0.15cm);       
\fill[ColOpen] (4*1,4) circle (0.15cm);   
\draw[-,line width=0.15cm,ColOpen] (4*1,4)--(4*1.5,4); 

\fill[ColOpen] (4*0,5) circle (0.15cm);       
\fill[ColOpen] (4*1,5) circle (0.15cm);   
\draw[-,line width=0.15cm,ColOpen] (4*1,5)--(4*1.5,5); 

\fill[ColOpen] (4*0,6) circle (0.15cm);       
\fill[ColOpen] (4*1,6) circle (0.15cm);   
\draw[-,line width=0.15cm,ColOpen] (4*1,6)--(4*1.5,6); 

\fill[ColOpen] (4*1,7) circle (0.15cm);   
\draw[-,line width=0.15cm,ColOpen] (4*1,7)--(4*1.5,7); 

\fill[ColOpen] (4*1,8) circle (0.15cm);   
\draw[-,line width=0.15cm,ColOpen] (4*1,8)--(4*1.5,8); 

\draw[-,line width=0.15cm,ColTrue] (6,1)--(9,1);
\draw[-,line width=0.15cm,ColTrue] (9,1)--(10,1);

\draw[-,line width=0.15cm,ColTrue] (6,2)--(9,2);
\draw[-,line width=0.15cm,ColTrue] (9,2)--(10,2);

\draw[-,line width=0.15cm,ColTrue] (6,3)--(9,3);
\draw[-,line width=0.15cm,ColTrue] (9,3)--(10,3);

\draw[-,line width=0.15cm,ColTrue] (6,4)--(9,4);
\draw[-,line width=0.15cm,ColTrue] (9,4)--(10,4);

\draw[-,line width=0.15cm,ColTrue] (6,5)--(9,5);
\draw[-,line width=0.15cm,ColTrue] (9,5)--(10,5);

\draw[-,line width=0.15cm,ColTrue] (6,6)--(9,6);
\draw[-,line width=0.15cm,ColTrue] (9,6)--(10,6);

\draw[-,line width=0.15cm,ColTrue] (6,7)--(9,7);
\draw[-,line width=0.15cm,ColTrue] (9,7)--(10,7);

\draw[-,line width=0.15cm,ColTrue] (6,8)--(9,8);
\draw[-,line width=0.15cm,ColTrue] (9,8)--(10,8);

\fill[ColTrue] (2,1) circle (0.15cm);  
\fill[ColTrue] (6,1) circle (0.15cm);  
\fill[ColTrue] (6,2) circle (0.15cm);  
\fill[ColTrue] (6,3) circle (0.15cm);  
\fill[ColTrue] (6,4) circle (0.15cm);  
\fill[ColTrue] (6,5) circle (0.15cm);  
\fill[ColTrue] (6,6) circle (0.15cm);  
\fill[ColTrue] (6,7) circle (0.15cm);  
\fill[ColTrue] (6,8) circle (0.15cm);  

\fill[ColFail] (4*1.1654,2) circle (0.15cm);

\begin{scope}[xshift=2cm,yshift=8cm]
\draw[draw=black, thick] (0,3) rectangle ++(5.5,-2);
\draw[-,line width=0.15cm,ColTrue] (0.5,2.5)--(1.5,2.5) node[right,black]{Conjecture holds};
\draw[-,line width=0.15cm,ColOpen] (0.5,2)--(1.5,2) node[right,black]{Conjecture is open};
\draw[-,line width=0.15cm,ColFail] (0.5,1.5)--(1.5,1.5) node[right,black]{Conjecture fails};
\end{scope}

\tikzset{cross/.style={cross out, draw=black, minimum size=2*(#1-\pgflinewidth), inner sep=0pt, outer sep=0pt},
cross/.default={2pt}}

\draw (4*1.5,1) node[cross]{};
\draw (4*1.1654,2) node[cross]{};
\draw (4*0.8627,3) node[cross]{};
\draw (4*0.5973,4) node[cross]{};
\draw (4*0.3740,5) node[cross]{};
\draw (4*0.1970,6) node[cross]{};
\draw (4*0.0683,7) node[cross]{};

\node[above] at (4*1.5,1+0.05) {\footnotesize$\phantom{=3/2}\gamma_{1}^{\mathrm{c}}=3/2$};
\node[above] at (4*1.1654,2+0.05) {\footnotesize$\phantom{\approx1.1654}\gamma_{2}^{\mathrm{c}}\approx1.1654$};
\node[above] at (4*0.8627,3) {\footnotesize$\phantom{\approx0.8627}\gamma_{3}^{\mathrm{c}}\approx0.8627$};
\node[above] at (4*0.5973,4) {\footnotesize$\phantom{\approx0.5973}\gamma_{4}^{\mathrm{c}}\approx0.5973$};
\node[above] at (4*0.3740,5) {\footnotesize$\phantom{\approx0.3740}\gamma_{5}^{\mathrm{c}}\approx0.3740$};
\node[above] at (4*0.1970,6) {\footnotesize$\phantom{\approx0.1970}\gamma_{6}^{\mathrm{c}}\approx0.1970$};
\node[above] at (4*0.0683,7) {\footnotesize$\phantom{\approx0.0683}\gamma_{7}^{\mathrm{c}}\approx0.0683$};

\draw[-,dotted](4*0,-1)--(4*0,-0.5);
\draw[-,dotted](4*0.5,-1)--(4*0.5,-0.5);
\draw[-,dotted](4*1,-1)--(4*1,-0.5);
\draw[-,dotted](4*1.5,-1)--(4*1.5,-0.5);

\draw[decorate, decoration = {brace,mirror,raise=0pt,amplitude=4pt},  thick] (4*0+0.03,-1)--(4*0.5-0.03,-1) node[below=1pt,pos=0.5,black]{\footnotesize$R_{\gamma,d}\le R_{0,d}$};
\draw[decorate, decoration = {brace,mirror,raise=0pt,amplitude=4pt},  thick] (4*0.5+0.03,-1)--(4*1-0.03,-1) node[below=1pt,pos=0.5,black]{\footnotesize$R_{\gamma,1}\le2\phantom{.912}$};
\node[below] at (4*0.75,-1.5) {\footnotesize$R_{\gamma,d}\le 2.912$};
\draw[decorate, decoration = {brace,mirror,raise=0pt,amplitude=4pt},  thick] (4*1+0.03,-1)--(4*1.5-0.03,-1) node[below=1pt,pos=0.5,black]{\footnotesize$R_{\gamma,d}\le1.456$};
\draw[decorate, decoration = {brace,mirror,raise=0pt,amplitude=4pt},  thick] (4*1.5+0.03,-1)--(11-0.03,-1) node[below=1pt,pos=0.5,black]{\footnotesize$R_{\gamma,d}=1\phantom{.000}$};
\fill[white] (10,-1.1) rectangle (11,-1);

\end{tikzpicture}
\end{center}
\caption{The state of the conjecture as of 2021}
\label{fig:LTconjecture}
\end{figure}

\subsection*{Funding}
The author was supported by VR grant 2017-04736 at the Royal Swedish Academy of Sciences.


\end{document}